\renewcommand{\comment}[1]
{\ifthenelse{\boolean{commentson}\AND\boolean{commentsaon}}
   {{\par\noindent\mbox{}{\small\blue[ *** #1 ]\par}\noindent\par}}{}}
\newcommand{\commenta}[1]
{\ifthenelse{\boolean{commentsaon}}
   {{\par\noindent\mbox{}{\small\color[rgb]{0, .5, 0}[ *** #1 ]\par}\noindent\par}}{}}
\newcommand\blue     {\color{blue}}
\newcommand*{\seq}[2][n]
            {{\ensuremath{#2_{1}, \allowbreak \ldots, \allowbreak #2_{#1}}}}
\newcommand*{\SEQ}[3]
            {{\ensuremath{#1_{#2}, \allowbreak \ldots, \allowbreak #1_{#3}}}}
\newcommand*{\SEQC}[3]
            {{\ensuremath{#1_{#2} \cdots #1_{#3}}}}
\newcommand*{\notmodels}{\mathrel{\,\not\!\models}}
\newcommand*{\partto}{\hookrightarrow}
\newcommand*{\mydash}{{\mbox{\tt-}}}
\newcommand*{\HU}{{\ensuremath{\cal{H U}}}\xspace}
\newcommand*{\HB}{{\ensuremath{\cal{H B}}}\xspace}
\newcommand*{\TU}{{\ensuremath{\cal{T U}}}\xspace}
\newcommand*{\TB}{{\ensuremath{\cal{T B}}}\xspace}
\newcommand*{\M}{{\ensuremath{\cal M}}\xspace}
\renewcommand*{\S}{{\ensuremath{\mathscr S}}\xspace}
\newcommand*{\NN}{{\ensuremath{\mathbb{N}}}\xspace}
\newcommand*{\ZZ}{{\ensuremath{\mathbb{Z}}}\xspace}
\newcommand*{\mylonger}[1]{\makebox[0pt][l]{#1}}
\begin{document}

\renewcommand{\today}{24-04-2015}
\markboth{\today}{\today}

\author{W{\l}odzimierz Drabent
  \affil{\\Institute of Computer Science,
         Polish Academy of Sciences,
         and
         IDA, Link\"oping University, Sweden
  }
}

\title
{Correctness and completeness of logic programs 
\hfill
\raisebox{0pt}[1ex][0ex]{\normalsize\small\sf%
  \begin{tabular}[t]{r@{}}
    Version of \today  \\
    \ifthenelse
             {\boolean{commentsaon}}
             {\normalsize\sf With comments}{}
  \end{tabular}%
}%
}

\ccsdesc[300]{Theory of computation~Program reasoning}
\ccsdesc[300]{Software and its engineering~Constraint and logic languages}
\ccsdesc[300]{Software and its engineering~Software verification}
\ccsdesc[300]{Software and its engineering~Software testing and debugging}

\keywords
{logic programming,
 declarative programming,
 program completeness,
 program correctness,
 specifications,
 declarative diagnosis\,/\,algorithmic debugging
}

\terms{Theory, Verification}

\begin{abstract}
We discuss proving correctness and completeness of definite clause logic
programs.  We propose a method for proving completeness, 
while for proving correctness we employ a method which
should be well known but is often neglected.
Also, we show how to prove completeness and
correctness in the presence of SLD-tree pruning, and point out that
approximate specifications simplify specifications and proofs.

We compare the proof methods to declarative diagnosis (algorithmic
debugging), showing that approximate specifications eliminate a major
drawback of the latter.  We argue that our proof methods reflect natural
declarative thinking about programs, and that they can be used, formally or
informally, in every-day programming.

\end{abstract}

\maketitle

\commenta
{
ACM classes: 	D.1.6; F.3.1; D.2.4; D.2.5.
}

\comment{Previous abstract:

This paper discusses proving correctness and completeness of 
definite clause logic programs.
For proving program correctness
we use the method of Clark, 
which should be well known but is often neglected.
A contribution of this paper is a method for proving completeness.
We also show how to prove completeness and correctness 
in the presence of SLD-tree pruning.
We point out that approximate specifications simplify specifications and
proofs. 

  We compare the proof methods with declarative diagnosis (algorithmic
  debugging).  
  We show that approximate specifications remove a substantial drawback of
  declarative diagnosis.
  We argue that the proof methods correspond to natural declarative thinking
  about programs, and that they can be used, formally or informally, in
  every-day programming.
} %

\section{Introduction}
The purpose of this paper is to present proof methods to deal with 
the correctness related issues of definite clause logic programs.  The goal is
to devise methods which are
 1.~simple and applicable in the practice of Prolog programming, 
and 2.~declarative
(i.e.\ not referring to any operational semantics),
in other words depending only on the logical reading of programs.

The notion of program correctness (in imperative and functional programming) 
divides in logic programming into correctness and completeness.
Correctness means that all answers of the program are compatible with the specification,
completeness -- that the program  produces all the answers required by the
specification. 
A specification may be {\em approximate}\/:
 for such a specification
some answers are allowed, but not required to be computed.
We discuss various advantages of using approximate specifications.

There exists
a simple, natural, and moreover complete proof method for correctness
\cite {Clark79}.  It should be well known but has been mostly neglected.
Surprisingly little has been done on reasoning about completeness.
See Section \ref{sec:related} for the related work.
In this paper, for proving correctness we employ Clark's method.
For proving completeness we introduce a simplification of the method of
\cite{DBLP:journals/tplp/DrabentM05}.
We introduce a new notion of semi-completeness; semi-completeness and
termination imply completeness.
We discuss a few sufficient conditions for completeness.
We also propose a way of proving
that completeness is preserved under pruning of SLD-trees 
(due to e.g.\ the cut of Prolog).
This is augmented by a way of proving correctness for such trees,
taking into account that pruning may remove some answers.

We would like to treat logic programming as a declarative programming
paradigm, and to reason about programs declaratively, i.e.\ 
independently from their operational semantics.
The method of proving correctness 
is declarative, and so is the method of proving semi-completeness.
However some of the sufficient conditions for completeness are not declarative,
as they, roughly speaking,
 refer to an operational notion of program termination.
Notice that in most practical cases termination has to be established anyway.
So such a way of proving completeness may be a reasonable compromise
between declarative and non-declarative reasoning.

This paper provides small examples; for a more substantial application of the
presented proof methods see \cite{Drabent12.iclp}.
That paper presents a construction of a practical Prolog program, the SAT
solver of \cite{howe.king.tcs-shorter}.  Starting from a formal specification, 
a logic program is constructed hand in hand with its
correctness and completeness proofs.  Then control is added (delays, and
pruning SLD-trees) to obtain the final Prolog program.
The added control preserves correctness, while
the approach presented here makes it possible to formally prove that also
completeness is preserved.
This example illustrates also the usefulness of approximate specifications.
The consecutively constructed versions of the program are not equivalent --
their main predicates define different relations.
So it is not a case of
a program transformation which preserves the semantics of a program.
What is preserved is correctness and completeness w.r.t.\ 
the same approximate specification.

Closely related to proving correctness and completeness of programs
is program diagnosis, i.e.\ locating errors in programs.  We show a
correspondence between the presented proof methods and declarative diagnosis 
(called also algorithmic debugging).
We show how an important drawback of declarative diagnosis can be easily
overcome by employing approximate specifications.

\paragraph{Preliminaries}
In this paper we consider definite clause programs (i.e.\ logic programs without
negation).
For (standard) notation and definitions the reader is referred to
\cite{Apt-Prolog}. 
Given a predicate symbol $p$, by an {\em atom for} $p$ (or {\em $p$-atom})
 we mean an atom whose
predicate symbol is $p$, and by a {\em clause for} $p$ -- a clause whose head
is an atom for $p$.  
The set of the clauses for $p$ in the program under consideration is called
{\em procedure} $p$.

We assume a fixed alphabet of function and predicate symbols,
not restricted to the symbols from the considered program.
The only requirement is that the Herbrand universe is nonempty. In particular,
the set of function symbols may be finite or infinite.
The Herbrand universe will be denoted by \HU, the Herbrand base by \HB,
and the sets of all terms, respectively atoms, by \TU and \TB.
By $ground(E)$ we mean the set of ground instances
of a term, atom, etc $E$.
For a program $P$, $ground(P)$ is the set of ground instances of the
clauses of $P$.
The least Herbrand model of a program $P$ will be denoted by  $\M_P$.
It is the least fixed point of the immediate consequence operator
 $T_P\colon 2^\HB\to 2^\HB$; 
\
 $\M_P = \bigcup_{i=1}^\infty T_P^i(\emptyset)$.

  By a computed (respectively correct) answer for a program $P$ and a
  query $Q$ we mean an instance $Q\theta$ of $Q$ where $\theta$ is a computed
  (correct) answer substitution \cite{Apt-Prolog} for $Q$ and $P$.
  We usually say just
   {\em answer}
 as each computed answer is a correct one, 
  and each correct answer (for $Q$) is a computed answer
  (for $Q$ or for some its instance).
  Thus, by soundness and completeness of SLD-resolution,
  $Q\theta$ is an answer for $P$ iff $P\models Q\theta$.

By ``declarative'' (property, reasoning, \ldots) 
we mean referring only to logical reading of programs, 
thus abstracting from any operational semantics.
So $Q$ being an answer for $P$ is a declarative property.
Properties depending on the order of atoms in clauses will not be considered
declarative, as logical reading does not distinguish equivalent formulae,
like 
$\alpha\land\beta$ and $\beta\land\alpha$.

  Names of variables begin with an upper-case letter.
  We use the list notation of Prolog.  So 
  $[\seq t]$  ($n\geq0$) stands for the list of elements $\seq t$.
  Only a term of this form is considered a list.
 (Thus terms like $[a,a|X]$, or  $[a,a|a]$, where $a\neq [\,]$ is a constant,
  are not lists).
\NN stands for the set of natural numbers, and \ZZ for the set of integers;
 $f\colon A\partto B$  states that $f$ is a partial function from $A$ to $B$.
 We use a name ``proof tree'' instead of ``implication tree'' of
 \cite{Apt-Prolog}.

\paragraph{Organization of the paper}
The next section introduces the basic notions.
Section \ref{par:approximate} discusses approximate specifications.
Sections \ref{sec:correctness} and \ref{sec:completeness}
deal with proving, respectively, correctness and completeness of programs.
Proving completeness and correctness in the context of pruning of SLD-trees
is discussed in Section \ref{sec:pruning}.
Section \ref{sec:dd} discusses declarative diagnosis.  
Section \ref{sec:related} is an overview of related work,
Section \ref{sec:discussion} provides additional discussion of the proposed
methods, 
and the next section concludes the paper.  Two appendices include results
about completeness of the proof methods, and proofs
missing in the main text.

\pagebreak[3]

\section{Correctness and completeness}
\label{sec:corr+compl}
This section introduces basic notions of specification, correctness and
completeness, and relates them to the answers of programs.

The purpose of a logic program is to compute a relation, or a few relations.
It is convenient to assume that the relations are on the Herbrand universe.
A specification should describe them.  In most cases
all the relations defined by a program should be described,
an $n$-ary relation for each
$n$-argument predicate symbol. 
It is convenient to describe such family of relations as a Herbrand
interpretation $S$: 
a tuple $(\seq t)$ is in the relation corresponding to a predicate symbol $p$
iff \mbox{$p(\seq t)\in S$.}
So by a (formal) {\bf specification} we mean
 a Herbrand interpretation; given a specification $S$, each $A\in S$ is
called a {\em specified atom}.

In imperative programming, correctness usually means that the program results
are as specified.  In logic programming, due to its non-deterministic nature,
we actually have two issues: {\bf correctness} (all the results are
compatible with the specification) and {\bf completeness} (all the results
required by the specification are produced). 
In other words, correctness means that the relations defined by the program are
subsets of the specified ones, and completeness means inclusion in the
opposite  direction. 
In terms of specified atoms and the least Herbrand model $\M_P$ of a
program $P$, we define:
{\sloppy\par}
\begin{definition}
\label{def:corr:compl}
Let $P$ be a program and $S\subseteq\HB$ a specification.
$P$ is {\bf correct} w.r.t.\ $S$ when $\M_P\subseteq S$;
it is {\bf complete} w.r.t.\ $S$ when $\M_P\supseteq S$.
\end{definition}
We will sometimes skip the specification when it is clear from the context.
We propose to call a program {\bf fully correct} when it is both correct and complete.

It is important to relate the answers of programs
to their correctness and completeness.
The relation is not trivial as, for a program $P$ and a query $Q$,
 the equivalence 
$\M_P\models Q$ iff $P\models Q$ is not true in some cases.%
\footnote{
\label{footnote:example:notequivalent}
    For instance assume a two element alphabet of
    function symbols, with a unary $f$ and
    a constant $a$.  Take $P = \{\, p(a).\ p(f(Y)).\}$.  
    Then $\M_P= \HU$.  Hence $\M_P\models p(X)$ but $P\notmodels p(X)$.

It is quite often believed that the equivalence always holds,
See for instance erroneous
Th.\,2 in \cite{DBLP:conf/rweb/Paschke11},
or the remark on the least Herbrand models on p.\,512 of
\cite{Naish.TPLP06}.

}  

\pagebreak[3]

\begin{lemma}
\label{lemma:MP}
Let $P$ be a program, and $Q$ a query such that
\begin{enumerate}
\item 
\label{lemma:MP:condition}
 $Q$ contains exactly $k\geq0$ (distinct) variables, and
 the underlying language has (at least) $k$ constants not occurring in $P,Q$,
 or a non-constant function symbol not occurring in~$P,Q$.
\nopagebreak
\end{enumerate}
\nopagebreak
Then $\M_P\models Q$ iff $P\models Q$.
\end{lemma}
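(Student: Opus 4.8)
The plan is to prove both directions. The direction $P \models Q$ implies $\M_P \models Q$ is immediate, since $\M_P$ is a model of $P$, so every logical consequence of $P$ holds in $\M_P$. The work is in the converse: assuming $\M_P \models Q$, I want to conclude $P \models Q$. By the observation already made in the excerpt (soundness and completeness of SLD-resolution), $P \models Q\theta$ holds iff $Q\theta$ is an answer; and, more to the point, $P \models Q$ for a possibly non-ground $Q$ iff $Q$ is a correct answer for itself, equivalently iff $Q$ is subsumed (up to renaming, or more precisely is an instance-wise consequence) — so the cleanest target is: show that $\M_P \models Q$ forces $P \models Q$ directly via a Herbrand-model argument.

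First I would reduce to the ground case. Write $Q = A_1, \ldots, A_m$ with variables $X_1, \ldots, X_k$. The statement $\M_P \models Q$ means: for every ground substitution $\sigma$ of the $X_i$ over \HU, $\M_P \models Q\sigma$. I want $P \models Q$, i.e.\ $\M'\models Q$ for every Herbrand model $\M'$ of $P$ over the fixed alphabet — equivalently (since it suffices to check the least model after adding fresh constants, or by a standard argument) $P \models Q\sigma$ for a suitable "generic" ground instance $\sigma$ that carries no information. Here is where condition~\ref{lemma:MP:condition} enters. Using the $k$ constants $c_1, \ldots, c_k$ not occurring in $P$ or $Q$ (or, in the other case, building $k$ distinct ground terms from the fresh non-constant function symbol, e.g.\ $f(a), f(f(a)), \ldots$ for some ground $a$ — one must check \HU is nonempty, which is assumed), define $\sigma = \{X_1/c_1, \ldots, X_k/c_k\}$ with the $c_i$ pairwise distinct. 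Then $\M_P \models Q$ gives $\M_P \models Q\sigma$, hence $P \models Q\sigma$ since $Q\sigma$ is ground and $\M_P$-truth of a ground atom is exactly $P$-entailment of it (as $\M_P = \bigcup_i T_P^i(\emptyset)$ and $P \models A$ iff $A \in \M_P$ for ground $A$).

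The main obstacle — and the heart of the proof — is lifting $P \models Q\sigma$ back to $P \models Q$. The idea is that the constants $c_i$ are "as good as variables" because they are fresh. Formally: take any Herbrand model $\M'$ of $P$ (over the fixed alphabet); I must show $\M' \models Q$, i.e.\ $\M' \models Q\rho$ for every ground $\rho$. Given such $\rho$ mapping $X_i \mapsto t_i \in \HU$, I would like to transport the fact $P \models Q\sigma$ to $P \models Q\rho$. One clean way: extend the language with the fresh constants being genuinely fresh, and use that a Herbrand model $\M'$ of $P$ can be used to build a model in which $c_i$ is interpreted as $t_i$ — but in a Herbrand setting we instead argue syntactically. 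Since $c_i$ does not occur in $P$, the set $ground(P)$ is invariant under the substitution $[c_i \mapsto t_i]$ only in a suitable sense; the standard trick is: $\M_P$, restricted to atoms not mentioning the $c_i$, together with the fact that replacing $c_i$ by anything is a homomorphism of the term algebra that fixes $P$, shows that from $A \in \M_P$ with $A$ mentioning $c_i$ we get (the image of $A$) $\in \M_P$ as well — because each step $T_P^n$ is built from $ground(P)$, and applying the endomorphism $h$ with $h(c_i) = t_i$, $h$ identity on all other symbols, commutes with $T_P$ since $h$ maps $ground(P)$ into $ground(P)$ (as the $c_i$ are absent from $P$). Hence $h(Q\sigma) = Q\rho$ is true in $\M_P$, so $P \models Q\rho$, for every $\rho$; therefore $\M_P \models Q$ in the full "for all ground instances" sense implies $P \models Q\rho$ for all $\rho$, which is precisely $P \models Q$. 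I expect the delicate point to be handling the case where there are too few fresh constants but a fresh non-constant function symbol exists: there one uses distinct terms $f(\ldots)$, and must separately verify they are pairwise distinct ground terms and that the same endomorphism argument goes through; and one should double-check the footnote-style pathology (finite alphabet, no fresh symbols) is exactly what condition~\ref{lemma:MP:condition} rules out.
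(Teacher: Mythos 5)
There is a genuine gap at the final step of your converse direction. After the endomorphism argument you conclude ``$P \models Q\rho$ for every ground $\rho$ over \HU, which is precisely $P \models Q$'' --- but that last identification is exactly the non-equivalence the lemma exists to circumvent. $P\models Q$ is truth of the universal closure of $Q$ in \emph{all} models of $P$, not just in all Herbrand models over the fixed alphabet; the paper's own footnote example ($P=\{p(a).\ p(f(Y)).\}$) has $P\models p(t)$ for every ground $t\in\HU$ and yet $P\notmodels p(X)$. Moreover, your endomorphism/$T_P$-commutation detour only re-derives something you already had by hypothesis: $\M_P\models Q$ \emph{means} $\M_P\models Q\rho$ for every ground $\rho$, so producing $Q\rho\in\M_P$ from $Q\sigma\in\M_P$ gains nothing, and the freshness of the $c_i$ is never used where it actually matters. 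You briefly mention the right idea (reinterpreting the fresh constants in an arbitrary model, i.e.\ the theorem on constants) but then explicitly set it aside in favour of the syntactic Herbrand argument; note also that the theorem-on-constants route does not transfer directly to the second half of condition~(\ref{lemma:MP:condition}), where one only has a fresh non-constant function symbol and the $k$ witnessing ground terms are not independently reinterpretable.

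The paper closes the gap differently, and uniformly for both cases: from $P\models Q\rho$ (with $\rho$ sending the variables to \emph{distinct} ground terms whose main function symbols are fresh), completeness of SLD-resolution yields a computed answer $Q\varphi$ of which $Q\rho$ is an instance; since computed answer substitutions mention only symbols of $P$ and $Q$, no fresh symbol occurs in $\varphi$, forcing each $v_i\varphi$ to be a variable, and distinctness of the $t_i$ forces these variables to be pairwise distinct; hence $Q\varphi$ is a variant of $Q$ and soundness gives $P\models Q$. If you want to keep your structure, you must replace your last step by such an argument (or, for the fresh-constants case only, by a genuine theorem-on-constants argument quantifying over all models of $P$).
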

In particular, the equivalence holds for ground queries.
It also holds when the set of function symbols is infinite (and $P$ is finite).
Note that condition (\ref{lemma:MP:condition}) is satisfied by any Prolog
program and query, so the equivalence holds in practice.
See Appendix \ref{appendix} for a proof of the lemma,
and \cite{drabent.Herbrand.arxiv.2015} for a slightly more general result
and further discussion.

Now we are ready to relate program answers to the correctness and completeness 
of programs.
(Remember that a query $Q$ is an answer for $P$ iff $P\models Q$.)

\begin{proposition}
\label{prop:answers}
Let $P$ be a program, and $S$ a specification.

\begin{enumerate}
\item 
\label{prop:answers:cond1}
$P$ is correct w.r.t.\ $S$ \ iff \
 $P\models Q$ implies $S\models Q$ for each query $Q$.

\item
\label{prop:answers:cond2}
    $P$ is complete w.r.t.\ $S$ \ iff \ 
 $S\models Q$ implies $P\models Q$ for any {\em ground} query $Q$.

\item
    \label{prop:answers:cond:infinite}
If the set of function symbols of the underlying language is infinite 
and $P$ is finite then
\\
    $P$ is complete w.r.t.\ $S$ \ iff \
 $S\models Q$ implies $P\models Q$ for any query $Q$.

\item
     \label{prop:answers:cond3}
   $P$ is complete w.r.t.\ $S$ \ iff \ 
  \\\mbox{}\hfill
   $S\models Q$ implies $P\models Q$ for any query $Q$ such that
     condition (\ref{lemma:MP:condition}) of Lemma \ref{lemma:MP} holds.
\end{enumerate}
\enlargethispage*{3ex}
\nopagebreak
Within this proposition, ``query $Q$'' can be replaced by 
``atomic query $Q$''.
\end{proposition}
\pagebreak[3]

\begin{proof}
For the first case, assume $\M_P\subseteq S$ (i.e.\ $P$ is correct).
If $P\models Q$ then $\M_P\models Q$, hence $S\models Q$.
Conversely, 
assume that $P\models Q$ implies $S\models Q$ for any atomic query $Q$.
If $A\in\M_P$ then $P\models A$ (by Lemma \ref{lemma:MP}),
 hence $S\models A$.
Thus $\M_P\subseteq S$.

For cases  \ref{prop:answers:cond2},
\ref{prop:answers:cond:infinite}, \ref{prop:answers:cond3}
it is sufficient to show the ``if'' of case \ref{prop:answers:cond2}
and the ``only-if'' of case \ref{prop:answers:cond3}. 
For the ``only-if'' of case \ref{prop:answers:cond3}, 
assume   $S\subseteq \M_P$  ($P$ complete) and $S\models Q$
(for a $Q$ as described).
Then $\M_P\models Q$ and, by Lemma \ref{lemma:MP}, $P\models Q$.

For the ``if'' of case \ref{prop:answers:cond2}
assume the right hand side of the equivalence, for atomic queries.
So for every ground atom $A$ we have 
 that $S\models A$ implies $P\models A$, hence  $\M_P\models A$.
Thus   $S\subseteq \M_P$.
\end{proof}

The following auxiliary notions will be used.
\begin{definition}
\label{def:corr:compl:aux}
The set of $p$-atoms from a specification $S$ will be called 
the {\em specification for} $p$ given by $S$.

A {\em predicate} (or a procedure) $p$ in $P$ is
 {\em correct} w.r.t.\ $S$ when each $p$-atom of $\M_P$ is in $S$, and 
 {\em complete} w.r.t.\ $S$ when each $p$-atom of $S$ is in $\M_P$.

An {\em answer} $Q$ is  {\em correct} w.r.t.\ $S$ when $S\models Q$.

$P$ is {\em complete for a query} $Q$ w.r.t.\  $S$
when
$S\models Q\theta$ implies that $Q\theta$ is an answer for $P$,
for any ground instance $Q\theta$ of $Q$.
\end{definition}
Informally, complete for $Q$ means that all the answers for $Q$ required 
by the specification are computed.
Note that a program is complete w.r.t.\ S
 iff it is complete w.r.t.\ S for any query
 iff it is complete w.r.t.\ S for any query $A\in S$.

There is an alternative way to define program completeness, namely by
$S\models Q$ implying $P\models Q$, for every query $Q$
\cite{DBLP:journals/tplp/DrabentM05}.   
It was not chosen here,
as the two notions of completeness are equivalent in the practical case of an
infinite alphabet of function symbols.
Also, it seems that the chosen version leads to simpler sufficient conditions
for completeness.

\section{Approximate specifications}
\label{par:approximate}
This section first shows that requiring specifications to be exact 
leads to difficulties and inconveniences.  Then usefulness of approximate
specifications is discussed.

Notice that if a program $P$ is both correct and complete w.r.t.\ $S$
then $\M_P=S$ and the specification describes exactly the relations defined by
$P$.  Often it is difficult (and not necessary) to specify the relations
exactly.  
The relations defined by the program are often not exactly the ones intended by
the programmer.  Such discrepancy is however not an error,
because for all intended usages of the program the answers are the same as
for a program 
defining the intended relations.
For certain atoms it is irrelevant whether they are, or are not, in $\M_P$.
We illustrate this by some examples.

\begin{example}
\label{ex:spec-append}%
The well-known program
\[
{\rm APPEND} = \{\, \
  app(\,[H|K],L,[H|M]\,) \gets app(\,K,L,M\,). \ \ \ \
\linebreak[3]
  app(\,[\,],L,L\,). \
\,\}
\]  
does not define the relation of  list concatenation
(provided that the underlying alphabet contains some function symbol not
occurring in the program, e.g.\ a constant $a$).
For instance, ${\rm APPEND} \models app([\,],a,a)$.
In other words, APPEND is not correct w.r.t.\ 
\[
S_{\rm APPEND}^0 = \{\, app(k,l,m)\in\HB \mid
                k,l,m \mbox{ are lists, } k*l=m
\,\},
\] 
where $k*l$ stands for the concatenation of lists $k,l$.
It is however complete w.r.t.\ $S_{\rm APPEND}^0$, and correct w.r.t.\ 
\[
S_{\rm APPEND} = \{\, app(k,l,m)\in\HB \mid
                \mbox{if $l$ or $m$ is a list then }
                app(k,l,m)\in S_{\rm APPEND}^0
\,\}.
\] 
Proofs are given later on in
 Examples \ref{ex:append-corr}, \ref{ex:append-compl}.
Correctness w.r.t.\,$S_{\rm APPEND}$ and completeness 
w.r.t.\,$S_{\rm APPEND}^0$ are sufficient to show that
APPEND will produce the required 
results when used to concatenate or split lists.%
\footnote{
 Formally:
 In such usage, in a query $Q=app(s,t,u)$  the
 argument $t$ (and $s$) is a list, or $u$ is a list.
 By the correctness w.r.t.\,$S_{\rm APPEND}$
 and Prop.\,\ref{prop:answers}(\ref{prop:answers:cond1}),
 for any answer $Q\theta$ of APPEND we have  $S_{\rm APPEND} \models Q\theta$.  
 As $t\theta$ or $u\theta$ is a list, each ground instance of $Q\theta$ is in 
  $S_{\rm APPEND}^0$.  Hence  $S_{\rm APPEND}^0\models Q\theta$.
  This means that $s\theta, t\theta, u\theta$ are lists and $u\theta$ is the
  concatenation of  $s\theta, t\theta$.

 Conversely, 
 let us use the fact that APPEND is complete w.r.t.\,$S_{\rm APPEND}^0$
 for any set of function symbols that includes $[\,]$ and $[\ |\ ]$. 
  Whenever $u\sigma$  is a ground list which is the concatenation of 
  $s\sigma$ and $t\sigma$ (equivalently $S_{\rm APPEND}^0 \models Q\sigma$)
  then, by Prop.\,\ref{prop:answers}(\ref{prop:answers:cond2}), 
  $Q\sigma$ is an answer of APPEND.
  The same holds for a non-ground list $u\sigma$, provided that the condition of
  Prop.\,\ref{prop:answers}(\ref{prop:answers:cond:infinite})
  or \ref{prop:answers}(\ref{prop:answers:cond3}) holds.
  In particular, when $u=[\seq u]$ is a list and $s,t$ are distinct variables
  then each  $app( [\seq[i]u],[\SEQ u {i+1} n ],u)$ ($i=0,\ldots,n$)
  is an answer of APPEND.
  In other words, the program produces each splitting of $u$.

When the conditions of Prop.\,\ref{prop:answers} are not satisfied, it is
possible that a program $P$ is 
complete w.r.t.\,$S_{\rm APPEND}^0$ and $S_{\rm APPEND}^0\models Q$, but
$Q$ is not an answer of $P$. 
This happens for instance when
$[\,]$ and $[\ |\ ]$ are the only function symbols,
 $Q=app([X],[\,],[X])$, and
$P$ consists of three clauses:
  $  app(\,[\,],L,L\,)$.;
  $app(\,[[\,]|K],L,[[\,]|M]\,) \gets app(\,K,L,M\,)$.;
  $app(\,[[H|T]|K],L,[[H|T]|M]\,) \gets app(\,K,L,M\,)$.
(A completeness proof for $P$ is similar to that of Ex.\,\ref{ex:append-compl}).

} 
 Understanding the exact relation defined by the program is unnecessary.
 See \cite{DBLP:journals/tplp/DrabentM05} for further discussion.

Note that $S_{\rm APPEND}$ is
the union of $S_{\rm APPEND}^0$ 
and the complement  $\HB\setminus T$ of the set
$
T = \{\, app(k,l,m)\in\HB \mid
                \mbox{ $l$ or $m$ is a list}
\,\}.
$
The intuition is that it is irrelevant whether the atoms from $\HB\setminus T$
 are in $\M_{\rm APPEND}$. 

When we are only interested in using APPEND to concatenate lists,
the specification for correctness can be weakened by replacing
``if $l$ or $m$ is a list''  by ``if $l$ is a list''.
Similarly, if only list splitting is of interest then ``if $m$ is a list''
can be used. 
\end{example}

A similar example is given in Sect.\,\ref{sec:dd-drawback}
where in an insertion sort program it is irrelevant how elements should be
inserted into an unsorted list. 

\begin{example}
Programmers often do not know the exact relations defined by logic programs.
For many programs in textbooks, the exact relations defined by the
predicates are not explained.  For instance this concerns the programs in
Chapter 3.2
of \cite{Sterling-Shapiro-short}
defining predicates
{\tt member/2}, {\tt sub{}list/2}, {\tt prefix/2}, {\tt suffix/2},
and {\tt append/3}.
Similarly to the case of APPEND described above, the relations
defined by the programs are not what we understand as the list membership
relation, the ${\it sub list}$ relation etc.  
The programs define certain supersets of these relations.
(They are complete w.r.t.\ specifications describing the relations.)
Suitable specifications for correctness can be easily
constructed, similarly as for APPEND.  Understanding the exact relations
defined by the programs is unnecessary, and they are not discussed in 
the book.

The exact relations defined by programs are quite often misunderstood.
\noindent
For instance, in
\cite[Ex.\,15]{DevilleLau94}
it is claimed that the constructed program, let us call it $P$,
defines the relation of list inclusion
\[
{\cal I}({\it included}) =
\left\{\, {\it included}(l_1,l_2)\in\HB\ \left| \
\begin{array}{l}
     l_1,l_2 \mbox{ are lists},
\\
\mbox{every element of $l_1$ belongs to $l_2$}  
\end{array}
\right.\right\}.
\]
In our terms, this means that predicate ${\it included}$ of $P$
 is correct and complete w.r.t.\ ${\cal I}({\it included})$.
However the correctness does not hold;
$P$ contains a clause ${\it included}(L_1,L_2)\gets L_1=[\,].$,
thus $P\models{\it included}([\,],t)$ for any term $t$.

\end{example}

The examples show that it is often not necessary to know the semantics of a
program exactly, and that its approximate description is sufficient.
An {\bf approximate specification} is a pair of specifications
$S_{\it c o m p l}, S_{corr}$ called, respectively,
{\em specification for completeness} and {\em specification for correctness}.
The intention is that the program is complete w.r.t.\ the former, and correct
w.r.t.\ the latter.
We say that a program $P$ is {\em fully correct} w.r.t.\ 
$S_{\it c o m p l}, S_{corr}$ when
$S_{\it c o m p l}\subseteq \M_P \subseteq S_{corr}$.
In other words, the specifications $S_{\it c o m p l}, S_{corr}$ describe,
 respectively, which atoms have to be computed, and
which are allowed to be computed;
for the atoms from $S_{corr} \setminus  S_{\it c o m p l}$
the semantics of the program is irrelevant.
Sometimes, by abuse of terminology, a single specification $S_{corr}$ or  $S_{\it c o m p l}$ will be
called approximate, especially when the intention is that that the
specification is distinct from $\M_P$.

Usually programs are intended to be used with queries $Q$ such that
if $Q$ is atomic then
it has no instances in $S_{corr}\setminus S_{\it c o m p l}$.
For such queries a program fully correct w.r.t.\ $S_{\it c o m p l}, S_{corr}$ 
behaves like a program correct and complete w.r.t.\ $S_{\it c o m p l}$:

\begin{proposition}
Let $S_{\it c o m p l}, S_{corr}$ be an approximate specification,
$P$ be a program complete w.r.t.\ $S_{\it c o m p l}$ and correct
w.r.t.\  $S_{corr}$, and $P'$ a program correct and complete w.r.t.\ 
$S_{\it c o m p l}$.
Let $Q$ be an atomic query such that 
$ground(Q)\cap (S_{corr}\setminus S_{\it c o m p l}) = \emptyset$.
Let $Q\theta$ be an instance of $Q$ such that 
condition (\ref{lemma:MP:condition}) of Lemma \ref{lemma:MP} holds 
for $P$ with $Q\theta$ and for $P'$ with $Q\theta$.
Then $Q\theta$ is an answer for $P$ iff  $Q\theta$ is an answer for $P'$.
\end{proposition}
In particular, the equivalence holds when the set of function symbols 
is infinite (and $P,P'$ are finite).

\begin{proof}
For
any program $P''$ such that $P''$ and $Q$ satisfy (\ref{lemma:MP:condition}),
it holds that
$P''\models Q\theta$ iff $ground(Q\theta)\subseteq M_{P''}$.
Note that   $\M_{P'}=S_{\it c o m p l}\subseteq \M_P \subseteq  S_{corr}$.
We have
$ground(Q)\cap S_{corr}   \subseteq S_{\it c o m p l}$.
Hence  $ground(Q\theta)\cap \M_P  \subseteq S_{\it c o m p l}$.
Thus 
$ground(Q\theta)\subseteq \M_P$ iff
$ground(Q\theta)\subseteq S_{\it c o m p l}$.
Hence $P\models Q\theta$ iff  $P'\models Q\theta$.
\end{proof}

 \vspace{0ex}

\paragraph{Comments}
\label{par:approximate:comments}
An obvious role of approximate specifications is to describe
selected properties of programs.
  Ex.\,\ref{ex:split-corr} below shows a case where
only the lengths of lists are taken into account, not their elements.
Note that such a property cannot be expressed by an exact specification. 

Typed logic programming \cite{types.lp.92-short} is outside of the scope of this
paper.  But it is worth mentioning here that, 
for some programs, introducing types may change their semantics,
so that they define exactly the intended relations.
For instance in some appropriate typed logic,
the APPEND program of Ex.\,\ref{ex:spec-append}
(with the arguments of $app$  typed to be lists)
would define exactly the relation of list concatenation
Thus such typed APPEND would be correct and complete w.r.t.\ $S_{\rm APPEND}^0$,
and the need for an approximate specification disappears.
On the other hand, 
to obtain a similar result for procedure ${\it insert}$ of 
Ex.\,\ref{ex:inc-diag2} it is necessary to employ a type of sorted integer
lists, 
which seems to be outside of the scope of known type systems for logic programs.

Often, while developing a program, the exact relations to be defined 
should not be fixed in advance.  This would mean taking some design decisions
too early. 
For instance, constructing an append program with
$S_{\rm APPEND}^0$ (or with $S_{\rm APPEND}$) as an exact 
specification would lead to a program less efficient and
more complicated than APPEND.
See also Ex.\,\ref{ex:inc-diag2}.
A larger example is shown in
 [\citeNP{Drabent12.iclp};\,\citeyearNP{drabent.corr.2012}]:
a program is derived by stepwise refinement,
and the semantics of (the common predicates in) the consecutive versions of the
program differ.  
What does not change is an approximate specification (of the common predicates);
the correctness and completeness is preserved in all the versions.

This example suggests a generalization of the paradigm of program development
by transformations which preserve program semantics 
 \cite[and the references therein]{PettorossiPS10shorter}.
It shows that it is useful and natural to use transformations which instead
preserve correctness and completeness w.r.t.\ an approximate specification.

Importance of approximate specifications for declarative diagnosis is 
discussed in Sect.\,\ref{sec:dd}.

\section{Proving correctness}
\label{sec:correctness}
To prove correctness we use the following property
\cite {Clark79}.
\begin{theorem}[Correctness]
\label{th:correctness}
A sufficient condition for a program $P$ to be correct w.r.t.\ 
a specification $S$ is
\nopagebreak
\vspace{-2ex}
\[
S\models P.
\]
\end{theorem}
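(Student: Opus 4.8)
The plan is to prove the contrapositive-free direct statement: assuming $S \models P$, show $\M_P \subseteq S$, i.e., every ground atom in the least Herbrand model is a specified atom. Since $\M_P = \bigcup_{i=1}^\infty T_P^i(\emptyset)$, the natural route is induction on $i$, proving $T_P^i(\emptyset) \subseteq S$ for all $i \geq 1$. For the base case, $T_P^1(\emptyset) = T_P(\emptyset)$ consists of those ground atoms $A$ for which there is a ground instance $A \gets B_1,\dots,B_n$ of a clause of $P$ with all $B_j \in \emptyset$, which forces $n = 0$; so $A$ is a ground instance of a unit clause of $P$. For the inductive step, suppose $T_P^i(\emptyset) \subseteq S$ and take $A \in T_P^{i+1}(\emptyset) = T_P(T_P^i(\emptyset))$; then there is a ground instance $A \gets B_1,\dots,B_n$ of a clause of $P$ with each $B_j \in T_P^i(\emptyset) \subseteq S$.

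The key step common to both cases is the use of the hypothesis $S \models P$. Since $S$ is a Herbrand interpretation and $S \models P$ means $S$ is a model of every clause of $P$, it is a model of every ground instance of every clause, in particular of the ground clause $A \gets B_1,\dots,B_n$ above. Because $S \models B_j$ for each $j$ (trivially so when $n=0$), the truth of the implication in $S$ gives $S \models A$, i.e., $A \in S$. This closes the induction, so $\M_P \subseteq S$, which is exactly correctness of $P$ w.r.t.\ $S$ by Definition~\ref{def:corr:compl}.

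I do not expect a genuine obstacle here; the only point requiring slight care is the passage from $S \models P$ (satisfaction of the universally quantified clauses) to satisfaction of each \emph{ground} instance, which is immediate for Herbrand interpretations, and the handling of the $n=0$ case uniformly with $n>0$. An alternative one-line argument avoids the induction entirely: $\M_P$ is by definition the \emph{least} Herbrand model of $P$, and $S \models P$ says $S$ is \emph{some} Herbrand model of $P$, so $\M_P \subseteq S$ by minimality. I would likely present this short version as the main proof and perhaps mention the $T_P$-induction as the underlying justification of "least model $\subseteq$ any model."
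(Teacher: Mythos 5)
Your proposal is correct and, in its final one-line form (``$S\models P$ means $S$ is a Herbrand model of $P$, hence $\M_P\subseteq S$ by minimality of the least Herbrand model''), it is exactly the paper's own proof; the $T_P$-induction you spell out is just the standard justification of that minimality fact.
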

So the sufficient condition states that,
     for each ground instance 
    $
    H\gets \seq B
    $
    ($m\geq0$)
    of a clause of the program,
     if $\seq B\in S$ then $H\in S$.
Equivalently,  $T_P(S)\subseteq S$.

\begin{proof}
$S\models P$ means that $S$ is a Herbrand model of $P$, thus $\M_P\subseteq S$.
\end{proof}

\begin{example}
\label{ex:split-corr}%
Consider a program SPLIT:
\begin{eqnarray}
  &&     s( [\,], [\,], [\,] ).     \label{split1}\\   
  &&     s( [X|X s], [X|Y s], Zs ) \gets s( X s, Z s, Y s ).   \label{split2}
\end{eqnarray}
\newcommand*{\specsplit}{\ensuremath{S}\xspace}
Let $|l|$ denote the length of a list $l$.
A specification
\[
\specsplit = \{\, s(l, l_1, l_2) \mid 
                 l, l_1, l_2 \mbox{ are lists, } 0\leq |l_1|-|l_2| \leq 1 \,\}
\]
describes 
how the sizes of the last two arguments of $S$ are related.
To show, by Th.\,\ref{th:correctness}, 
that the program is correct w.r.t.\ \specsplit,
consider a ground instance 
$s( [h|t], [h|t_2], t_1)\gets s( t, t_1, t_2)$ of (\ref{split2}).
Assume $s(t, t_1, t_2)\in\specsplit$. Thus $ [h|t], [h|t_2], t_1$ are lists.
Let $m= |t_1|-|t_2|$.  As $m\in\{0,1\}$, we have
$|[h|t_2]|-|t_1| = 1- m \in\{0,1\}$.  So the head 
$s( [h|t], [h|t_2], t_1)$ is in \specsplit.
The proof for  (\ref{split1}) is trivial.

A stronger specification with respect to which SPLIT is correct is shown 
below.
\end{example}
\begin{example}
\label{ex:append-corr}
To show that program APPEND is correct w.r.t.\  $S_{\rm APPEND}$ from 
Ex.\,\ref{ex:spec-append}, 
take a $(H\gets B)\in ground({\rm APPEND})$ such that
$B = app(k,l,m)\in S_{\rm APPEND}$.
So $H = app(\,[h|k],l,[h|m]\,)$.
If $l$ or $[h|m]$ is a list then 
$k,l,m$ are lists and $k*l=m$.  So
$[h|k],l,[h|m]$ are lists and
$[h|k]*l=[h|m]$.  Thus $H\in S_{\rm APPEND}$.

For the other clause of the program, take
$app([\,], m,m)\in ground({\rm APPEND})$.
If $m$ is a list then $[\,]*m=m$; so 
$app([\,], m,m)\in S_{\rm APPEND})$.

\end{example}

\begin{example}
\label{ex:split-corr2}%
Now we show correctness of SPLIT w.r.t.\ a more precise specification
\[
S_{\rm SPLIT} = 
\left\{\,
\begin{array}{l}
 s([\seq[2n]t], [t_1,\cdots,t_{2n-1}], [t_2,\cdots,t_{2n}] ),
\\
 s([\seq[2n+1]t], [t_1,\cdots,t_{2n+1}], [t_2,\cdots,t_{2n}] )
\end{array}
   \,
   \left|
   \,
   \begin{array}l
      n\geq0,\\ \seq[2n+1] t\in \HU  \\
   \end{array}
\right.\right\}
,
\]
where
$[t_k,\cdots,t_l]$ denotes the list $[t_k,t_{k+2},\ldots,t_l]$,
for $k,l$ being both odd or both even.
(When $k>l$, $[t_k,\cdots,t_l]=[t_k,\ldots,t_l]=[\,]$.)
For  (\ref{split1}) the proof is trivial.
Consider a ground instance $H\gets B$ of (\ref{split2}).
Assume $B\in S_{\rm SPLIT}$.  
If
$B= s([\seq[2n]t], \linebreak[3]
      [t_1,\cdots,t_{2n-1}],\linebreak[3] [t_2,\cdots,t_{2n}] )$
then
$H=s([t_0,\seq[2n]t], \linebreak[3]
     [t_0,t_2,\cdots,t_{2n}], [t_1,\cdots,t_{2n-1}] )$,
for some $t_0$.
If $B= s([\seq[2n+1]t],\linebreak[3]
         [t_1,\cdots,t_{2n+1}],\linebreak[3] [t_2,\cdots,t_{2n}] )$
then
$H= s([t_0,\seq[2n+1]t], [t_0,t_2,\cdots,t_{2n}], [t_1\cdots,t_{2n+1}] )$.
In both cases $H\in S_{\rm SPLIT}$.
(To see this, rename each $t_i$ as $t'_{i+1}$.)
\end{example}

Completeness of this correctness proving method is discussed in 
Appendix
\ref{appendix:compl}. 
For more examples
(among others involving difference lists and accumulators),
and for further explanations, references and discussion
see \cite {DBLP:journals/tplp/DrabentM05}.

\section{Proving completeness}
\label{sec:completeness}
We first introduce a notion of semi-completeness, and sufficient conditions
under which semi-completeness of a program implies its completeness.
The next subsection presents a sufficient condition for semi-completeness.
For cases where completeness does not follow from semi-completeness, we
present a way of proving completeness directly.

\subsection{Semi-completeness}
Before discussing semi-completeness we present a few auxiliary notions,
which have been introduced in a context of proving program termination.

\begin{definition}
A {\bf level mapping} is a
function $|\ |\colon \HB\to \NN$ assigning natural numbers to atoms.

A program $P$ is  {\bf recurrent} {\em w.r.t.\ a level mapping}~$|\ |$
\cite{DBLP:journals/jlp/Bezem93,Apt-Prolog} if, in
every ground instance  $H\gets\seq B\in ground(P)$ of its clause ($n\geq0$),
$|H|>|B_i|$ for all $i=1,\ldots,n$.
A program is {\em recurrent}
if it is recurrent w.r.t.\ some level mapping.   
\end{definition}

\begin{definition}
\label{def:acceptable}%
A program $P$ is {\bf acceptable} w.r.t.\ a specification $S$ and a level
mapping $|\ |$ if 
$P$ is correct w.r.t.\ $S$, and for every
$H\gets\seq B\in ground(P)$
we have $|H|>|B_i|$ whenever $S\models B_1,\ldots,B_{i-1}$.
A program is {\em acceptable} if it is acceptable w.r.t.\ some level mapping
and some specification.

\end{definition}

This definition is more general than the original one
\cite{AP93,Apt-Prolog}
which uses a model of $P$, instead of a specification for which $P$ is
correct. 
Both definitions are equivalent.%
\footnote{
  They define the same programs to be acceptable.
  More precisely,
  a program $P$ is acceptable w.r.t.\ 
  $|\ |$ and a specification $S$ 
  iff, according to the original definition,
  $P$ is acceptable w.r.t.\ $|\ |$ and a model $I$ of $P$.
  (The ``only-if'' case holds for each model $I$ such that 
  $\M_P\subseteq I\subseteq S$,
  the ``if'' case for each $S$ such that  $\M_P\subseteq S\subseteq I$.)
}  

We now introduce a notion of semi-completeness.  It may be seen as a step on
the way to proving completeness.

\begin{definition}
A program $P$ is {\bf semi-complete} 
w.r.t.\ a specification $S$ if
$P$ is complete w.r.t.\ $S$ for 
(cf.\ Df.\,\ref{def:corr:compl:aux})
any query $Q$ for which there exists a finite SLD-tree.
\end{definition}

In practice, the existence of a finite SLD-tree means
that $P$ with $Q$ terminates under some selection rule.
For a semi-complete program, if a computation for a query $Q$
terminates then
it has produced all the answers for $Q$ required by the specification.
Obviously, a complete program is semi-complete.
We also have:

\begin{proposition}[Completeness]
\label{prop:semi-compl}%
Let a program $P$ be semi-complete w.r.t.\ $S$. 
The program is complete w.r.t\ $S$ if
\begin{enumerate}
\item 
\label{prop:semi-compl:term}
for each query $A\in S$ there exists a finite SLD-tree, or
\item 
\label{prop:semi-compl:recu}
the program is recurrent, or
\item 
\label{prop:semi-compl:accept}
the program is acceptable
(w.r.t.\ a specification $S'$ possibly distinct from $S$).
\end{enumerate}
\end{proposition}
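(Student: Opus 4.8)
The plan is to prove each of the three sufficient conditions by reducing it to the definition of semi-completeness, i.e.\ to the existence of a finite SLD-tree for the relevant queries. Since $P$ is assumed semi-complete w.r.t.\ $S$, and a program is complete w.r.t.\ $S$ iff it is complete w.r.t.\ $S$ for every query $A\in S$ (as noted after Df.\,\ref{def:corr:compl:aux}), it suffices to show that for each such atomic query $A\in S$ there is a finite SLD-tree; semi-completeness then gives completeness for $A$, and hence overall completeness. Case \ref{prop:semi-compl:term} is then immediate: it is literally the hypothesis that such a finite SLD-tree exists, so nothing is left to do.

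For case \ref{prop:semi-compl:recu}, I would invoke the standard fact about recurrent programs: if $P$ is recurrent w.r.t.\ a level mapping $|\ |$, then for every query $Q$ \emph{all} SLD-derivations of $Q$ are finite (the program left-terminates, indeed terminates under every selection rule). The argument is the usual one: along any derivation the multiset of level values of the atoms in the current query strictly decreases in the multiset order on $\NN$, which is well-founded, so derivations cannot be infinite; together with the finite branching of SLD-trees (König's lemma) this makes the whole SLD-tree finite. In particular a finite SLD-tree exists for each $A\in S$, and case \ref{prop:semi-compl:recu} follows from case \ref{prop:semi-compl:term}. I would cite \cite{DBLP:journals/jlp/Bezem93,Apt-Prolog} for the termination property of recurrent programs rather than reprove it.

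Case \ref{prop:semi-compl:accept} is the one that needs the most care, and is where I expect the main obstacle. Here $P$ is acceptable w.r.t.\ some level mapping $|\ |$ and some specification $S'$ (possibly $\neq S$), which by definition (Df.\,\ref{def:acceptable}) means $P$ is correct w.r.t.\ $S'$ and the decrease $|H|>|B_i|$ holds in each ground clause instance whenever $S'\models B_1,\ldots,B_{i-1}$. The classical result (Apt–Pedreschi, see \cite{AP93,Apt-Prolog}) is that an acceptable program is left-terminating, i.e.\ every SLD-derivation via the leftmost selection rule of a ground (more generally bounded) query is finite; combined with finite branching this yields a finite SLD-tree under the leftmost rule. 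So for each ground $A\in S$ there exists a finite SLD-tree, and we again reduce to case \ref{prop:semi-compl:term}. The subtlety is precisely that acceptability only guarantees termination under the \emph{leftmost} rule, not under an arbitrary rule, and that it is stated relative to $S'$ and not $S$; but that is harmless, because the definition of ``complete w.r.t.\ $S$ for a query'' in Df.\,\ref{def:corr:compl:aux} only demands the existence of \emph{some} finite SLD-tree, and the leftmost one supplied by Apt–Pedreschi serves. One should also note that correctness of $P$ w.r.t.\ $S'$ is not actually used here beyond ensuring acceptability is well-posed — it is the level-mapping decrease condition relative to $S'$ that drives termination. I would therefore present case \ref{prop:semi-compl:accept} by citing the left-termination theorem for acceptable programs and then appealing to case \ref{prop:semi-compl:term}, flagging the leftmost-rule point explicitly so the reader sees why the mismatch between $S$ and $S'$ causes no difficulty.
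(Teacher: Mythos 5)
Your proof is correct and follows essentially the same route as the paper's: the paper likewise reduces cases \ref{prop:semi-compl:recu} and \ref{prop:semi-compl:accept} to case \ref{prop:semi-compl:term} by citing the standard termination results that recurrent (resp.\ acceptable) programs have finite SLD-trees (resp.\ LD-trees) for ground queries, and your observation that the LD-tree suffices because semi-completeness only requires \emph{some} finite SLD-tree is exactly the point the paper relies on. The only nitpick is your claim that recurrence makes \emph{all} SLD-derivations of \emph{every} query finite --- this holds only for ground (or bounded) queries --- but since you apply it only to the ground atoms $A\in S$, the argument is unaffected.
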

Note that condition (\ref{prop:semi-compl:term}) is equivalent 
to each $A\in S$ being an instance of a query $Q$ with a finite SLD-tree.

\begin{proof}
For a program $P$ semi-complete w.r.t.\ $S$, the first part of condition 
(\ref{prop:semi-compl:term})  implies that 
$P$ is complete w.r.t.\ $S$ for each query $A\in S$;
hence $S\subseteq \M_P$.
Condition (\ref{prop:semi-compl:recu}), or (\ref{prop:semi-compl:accept})
implies that for any ground query
each SLD-tree, respectively each LD-tree,
is finite \cite{Apt-Prolog}.  
This implies (\ref{prop:semi-compl:term}).
\end{proof}

\subsection{Proving semi-completeness}
\label{sec:semi-compl}
We need the following notion.
\begin{definition}
  A ground atom $H$ is
  {\bf covered} {\em by a clause} $C$ w.r.t.\ a specification $S$
  if $H$ is the head of a ground instance  
  $
  H\gets \seq B
  $
  ($n\geq0$) of $C$, such that all the atoms $\seq B$ are in $S$.

  A ground atom $H$ is {\em covered by a program} $P$ w.r.t.\ $S$
  if it is covered w.r.t.\ $S$ by some clause $C\in P$.
  A specification $S$ is {\em covered by a program} $P$
  if each $H\in S$ is covered by $P$ w.r.t.\ $S$.

\end{definition}
For instance, given a specification 
$S = \{\, p(s^i(0))\mid i\geq0 \,\}$,
atom $p(s(0))$ is covered both by
$p(s(X))\gets p(X)$ and by
$p(X)\gets p(s(X))$.
The notion of an atom covered by a clause stems from \cite{Shapiro.book}.
A specification $S$ covered by $P$ is called in \cite{Apt93}
a supported Herbrand interpretation of $P$.

Now we are ready to present a sufficient condition for semi-completeness,
which together with Prop.\,\ref{prop:semi-compl}
provides a sufficient condition for completeness.

\begin{theorem}[Semi-completeness]
\label{th:completeness}%
If a specification $S$ is covered  by a program $P$ 
then $P$ is semi-complete w.r.t.~$S$.
\end{theorem}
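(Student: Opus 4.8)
The plan is to show that for any query $Q$ with a finite SLD-tree, the program $P$ is complete for $Q$ w.r.t.\ $S$; that is, whenever $S\models Q\theta$ for a ground instance $Q\theta$, then $P\models Q\theta$. Since a query is a conjunction of atoms and $S\models Q\theta$ holds iff $S\models A$ for each atom $A$ of $Q\theta$, and since a finite SLD-tree for $Q$ provides finite SLD-trees for each of its atoms, it suffices to treat the case where $Q$ is a single atom $A$ with a finite SLD-tree such that $S\models A$; the general case follows by combining the SLD-refutations for the individual atoms.

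So fix a ground atom $A\in S$ possessing a finite SLD-tree $\mathcal{T}$. I would proceed by induction on the size (number of nodes) of $\mathcal{T}$, or equivalently on the height of $\mathcal{T}$, proving: if $A$ is ground, $S\models A$, and $A$ has a finite SLD-tree, then $P\models A$. Since $A$ is covered by $P$ w.r.t.\ $S$, there is a clause $C\in P$ and a ground instance $H\gets B_1,\ldots,B_n$ of $C$ with $H=A$ and $B_1,\ldots,B_n\in S$. The selected atom at the root of $\mathcal{T}$ is $A$ itself; among the children of the root is the node obtained by resolving $A$ with clause $C$ (using the variant of $C$ that unifies with $A$). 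The resolvent is an instance of $B_1,\ldots,B_n$ in the sense that $A\gets B_1,\ldots,B_n$ being a ground instance of $C$ means the mgu-based resolvent, suitably instantiated, yields the query $B_1,\ldots,B_n$. That child node has a finite SLD-tree (a subtree of $\mathcal{T}$), hence each $B_i$ has a finite SLD-tree. Each $B_i$ is ground and in $S$, so by the induction hypothesis $P\models B_i$ for every $i$. Therefore $P\models B_1\land\cdots\land B_n$, and since $H\gets B_1,\ldots,B_n$ is an instance of a clause of $P$, we get $P\models A$.

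A few technical points need care in carrying this out. First, the base case: if $n=0$, then $C$ has a ground instance $A\gets{}$, i.e.\ $A$ is a ground instance of a unit clause of $P$, so trivially $P\models A$; this matches the leaves/short trees of the induction. Second, I must be careful about the relation between "ground instance of a clause" (used in the definition of covered) and "SLD-resolution step with an mgu" (used in the SLD-tree): the standard lifting argument says that if $A\gets B_1,\ldots,B_n$ is a ground instance of $C$, then resolving $A$ with (a variant of) $C$ succeeds and the ground query $B_1,\ldots,B_n$ is an instance of the resulting resolvent; hence a subtree of $\mathcal{T}$ rooted at that resolvent, restricted appropriately, witnesses finiteness for the $B_i$'s. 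Third, to pass from "finite SLD-tree for the resolvent $R$" to "finite SLD-tree for each atom $B_i$" and to "finite SLD-tree for the ground instance $B_1,\ldots,B_n$", I would invoke the standard facts that subtrees of finite SLD-trees are finite and that a finite SLD-tree for a query yields finite SLD-trees for each conjunct and for instances of the query (again by lifting).

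The main obstacle is the bookkeeping in the induction: ensuring the inductive measure genuinely decreases. The resolvent $R$ is not ground in general and its SLD-tree is a proper subtree of $\mathcal{T}$, so the height strictly decreases; but I then want to apply the IH to each ground atom $B_i$ individually, which requires knowing $B_i$ has a finite SLD-tree of strictly smaller height than $\mathcal{T}$ — this follows because a finite SLD-tree for $R=(B_1,\ldots,B_n)$ has height at most $\mathrm{height}(\mathcal{T})-1$, and the SLD-tree for each $B_i$ (obtained by selecting $B_i$ first and projecting) has height bounded by that of $R$. Making the "project the tree onto one conjunct" step precise — or, alternatively, strengthening the induction to speak directly about finite SLD-trees for ground queries rather than atoms — is where the real work lies; everything else is a routine application of the soundness/lifting apparatus of SLD-resolution together with the covering hypothesis. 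I would therefore phrase the induction on the height of a finite SLD-tree for a ground query $Q'$ with $S\models Q'$, which smooths over the conjunct-projection issue and directly gives the theorem.
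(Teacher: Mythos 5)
Your final formulation --- induction on the height of a finite SLD-tree for a \emph{ground} query $Q'$ with $S\models Q'$ --- is sound and does prove the theorem: the selected atom $A$ of $Q'$ is in $S$, hence covered via a ground clause instance $A\gets B_1,\ldots,B_n$ with all $B_i\in S$; by lifting, $Q''=Q'[A:=B_1,\ldots,B_n]$ is an instance of a child of $Q'$ that roots a strictly shorter finite subtree, so the induction hypothesis gives $P\models Q''$, and closure of $\M_P$ under $ground(P)$ gives $P\models Q'$. This is, however, a genuinely different decomposition from the paper's. The paper derives Th.\,\ref{th:completeness} from a single lemma about pruned csSLD-trees (Lemma \ref{lemma:cssld.complete}), proved by constructing top-down a branch of the \emph{given} tree all of whose nodes have ground instances inside $S$; since the tree has no infinite branches and a covered node cannot be a failure leaf, that branch must end in the empty query, so $Q\theta$ is an instance of an answer \emph{of that tree}. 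Your bottom-up induction establishes only $\M_P\models Q\theta$ --- which suffices here, since for ground $Q\theta$ this is equivalent to $Q\theta$ being an answer for $P$ --- but it does not localize the answer to the particular tree, which is the extra strength the paper needs for the pruned-tree generalization (Th.\,\ref{th:completeness:pruned}). Your version also does not directly cover infinite programs or trees that are infinite but have no infinite branches (Prop.\,\ref{prop:compl:infi}), since the height of such a tree need not be a natural number.

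One step you should retract explicitly rather than merely ``smooth over'': the opening claim that ``a finite SLD-tree for $Q$ provides finite SLD-trees for each of its atoms'' is false in general. For $P=\{p\gets p.\}$ and $Q=(p,q)$, the tree that selects $q$ first is finite (immediate failure), while $p$ alone has no finite SLD-tree; finiteness of a tree for a conjunction may be due to finite failure of one conjunct while another loops. So the reduction to atomic queries is not available as a first move. You were right to abandon it: the ground-query induction you settle on applies the induction hypothesis to the whole resolvent instance $Q''$ rather than to its conjuncts, and so never needs conjunct projection. The remaining lifting facts you invoke (an instance of a query with a finite SLD-tree has a finite SLD-tree of no greater height, and the ground body instance is an instance of the mgu-resolvent) are standard and are exactly what the paper's proof uses via the lifting theorem.
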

The proof
is presented in Appendix
\ref{appendix}.  %
Note that the sufficient condition of Th.\,\ref{th:completeness}
 is equivalent to $S\subseteq T_P(S)$;
 the latter implies $S\subseteq {\it g f p}(T_P)$.
It is also equivalent to $S$ being a model of ONLY-IF$(P)$,
a theory used in defining the completion of $P$
(see e.g.\ \cite{Doets} or \cite{DBLP:journals/tplp/DrabentM05} for a definition).

\begin{example}
\label{ex:split-compl}%
We show that program SPLIT from Ex.\,\ref{ex:split-corr} is complete w.r.t.
the specification from Ex.\,\ref{ex:split-corr2},
\[
S_{\rm SPLIT} = 
\left\{\,
\begin{array}{l}
 s([\seq[2n]t], [t_1,\cdots,t_{2n-1}], [t_2,\cdots,t_{2n}] ),
\\
 s([\seq[2n+1]t], [t_1,\cdots,t_{2n+1}], [t_2,\cdots,t_{2n}] )
\end{array}
   \,
   \left|
   \,
   \begin{array}l
      n\geq0,\\ \seq[2n+1] t\in \HU  \\
   \end{array}
\right.\right\}
,
\]
where
$[t_k,\cdots,t_l]$ denotes the list $[t_k,t_{k+2},\ldots,t_l]$,
for $k,l$ being both odd or both even.

Atom
$s([\,],[\,],[\,])\in S_{\rm SPLIT}$ is covered by clause (\ref{split1}).
For $n>0$,
any atom
$A= \linebreak[3]
s([\seq[2n]t], \linebreak[3]
      [t_1,\cdots,t_{2n-1}],\linebreak[3] [t_2,\cdots,t_{2n}] )$
is covered by an instance of  (\ref{split2}) with a body
$B= \linebreak[3]
     s([t_2,\ldots,t_{2n}], \linebreak[3]
     [t_2,\cdots,t_{2n}] ,\linebreak[3] [t_3,\cdots,t_{2n-1}]
 )$.
Similarly, 
for $n\geq0$ and any atom
$A= s([\seq[2n+1]t],\linebreak[3]
         [t_1,\cdots,t_{2n+1}],\linebreak[3] [t_2,\cdots,t_{2n}] )$,
the corresponding body is
$B= s([t_2,\ldots,t_{2n+1}],\linebreak[3] [t_2,\cdots,t_{2n}],
 \linebreak[3]     [t_3,\cdots,t_{2n+1}] )$.
In both cases, $B\in S_{\rm SPLIT}$.
(To see this, rename each $t_i$ as $t'_{i-1}$.)
So $S_{\rm SPLIT}$ is covered by SPLIT.
Thus SPLIT is semi-complete w.r.t.\ $S_{\rm SPLIT}$, by
Th.\,\ref{th:completeness}.

Now by Prop.\,\ref{prop:semi-compl} the program is complete, as it is
recurrent under the level mapping
\[
  | \, s(t,t_1,t_2) | = |t|,
\quad \mbox{where} \quad
\begin{array}[t]{l}
  |\, [h|t]\, | = 1+|t|,    \\
  |f(\seq t)| = 0 \ \mbox{ where $n\geq0$ and $f$ is not }[\ | \ ],    \\
\end{array}
\]
(for any ground terms $h,t,\seq t$, and any function symbol $f$). 

\end{example}

\begin{example}
\label{ex:append-compl}
Consider program APPEND and specification $S_{\rm APPEND}^0$ from
Ex.\,\ref{ex:spec-append}. 
We encourage the reader to check that each
$A\in S_{\rm APPEND}^0$ is covered (by APPEND w.r.t.\ $S_{\rm APPEND}^0$).
Hence by Th.\,\ref{th:completeness} the program is semi-complete
(w.r.t.\ $S_{\rm APPEND}^0$).  Now its completeness follows by 
Prop.\,\ref{prop:semi-compl}, as the program is recurrent under the level
mapping $|app(t,t_1,t_2) | = |t|$, where $|t|$ is as in the
Ex.\,\ref{ex:split-compl}. 

\end{example}

The notion of semi-completeness is tailored for finite programs.  An SLD-tree
for an infinite program may be infinite, but with all branches finite.  Then
a slight generalization of Th.\,\ref{th:completeness}
applies (Appendix, Prop.\,\ref{prop:compl:infi}, Ex.\,\ref{ex:prop:compl:infi}).

\newcommand{\propcompletenessinfinite}
  {%
    If all the atoms from a specification $S$ are covered w.r.t.~$S$
     by a program $P$,
    and there exists an SLD-tree for $P$ and a query $Q$ with no infinite
    branches then $P$ is complete for $Q$.
  }
\subsection{Proving completeness directly}
\label{sec:completeness-directly}
Semi-completeness and completeness are declarative properties of programs.
However 
the sufficient conditions for completeness of
Prop.\,\ref{prop:semi-compl}(\ref{prop:semi-compl:term}) and
\ref{prop:semi-compl}(\ref{prop:semi-compl:accept}) are not declarative.  
The former refers explicitly to an operational notion of SLD-tree.
The latter employs the notion of acceptable program, which depends
on the order of atoms in program clauses, and is related to LD-resolution;
this should not be considered as a purely logical reading of formulae.
Only condition (\ref{prop:semi-compl:recu}) of 
Prop.\,\ref{prop:semi-compl} can be considered declarative, 
as it solely depends on logical reading of program clauses.
However, this condition is often inapplicable, as many practical programs are
not recurrent. 
Note that the sufficient condition for semi-completeness of 
Th.\,\ref{th:completeness} is declarative.

We are interested in declarative reasoning about logic programs.
So we present a declarative sufficient condition for completeness.
Moreover,
it is applicable in cases where completeness does not follow from
semi-completeness. 
We first generalize the notion of level mapping
 by allowing that it is a {\em partial}
function $|\ |\colon \HB\partto \NN$ assigning natural numbers to some atoms.

\begin{definition}
  A ground atom $H$ is {\bf recurrently covered}
  by a program $P$   w.r.t.\ a specification $S$ and a level mapping 
$|\ |\colon \HB\partto \NN$
  if $H$ is the head of a ground instance  
  $
  H\gets \seq B
  $
($n\geq0$)
of a clause of the program, such that
$|H|, |B_1|, \ldots |B_n|$ are defined, 
$\seq B\in S$,
and $|H|>|B_i|$ for all $i=1,\ldots,n$.

\end{definition}

For instance, given a specification 
$S = \{\, p(s^i(0))\mid i\geq0 \,\}$,
atom $p(s(0))$ is recurrently covered by a program
$\{\, p(s(X))\gets p(X).\ \ p(X)\gets p(X). \}$ under a level mapping for which 
$|p(s^i(0))|=i$.
No atom is recurrently covered
by a program \mbox{$\{\, p(X)\gets p(X). \}$}.
Obviously, if $H$ is recurrently covered by $P$ then it is covered by $P$.

The following theorem
 is a reformulation of Th.\,6.1 from \cite{Deransart.Maluszynski93}
(derived there within a sophisticated theory relating attribute
grammars and logic programs).

\begin{theorem}[Completeness 2]
\label{th:completeness:recu}%
If, under some level mapping $|\ |\colon \HB\partto \NN$,
 all the atoms from a specification $S$ are
recurrently covered by a program $P$ w.r.t.\ $S$ 
then $P$ is complete w.r.t.\ $S$.
\end{theorem}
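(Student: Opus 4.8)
The plan is to prove the statement directly, by showing $S\subseteq\M_P$ (which by Definition \ref{def:corr:compl} is exactly completeness of $P$ w.r.t.\ $S$), using well-founded induction driven by the level mapping. The first observation I would make is that for every $A\in S$ the value $|A|$ is defined: since $A$ is recurrently covered by $P$ w.r.t.\ $S$, it is the head of a ground clause instance with $|A|$ (and the relevant body levels) defined. So $|\ |$ restricted to $S$ is a total function into $\NN$, and we may do strong induction on $n=|A|$.

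For the induction step, fix $A\in S$ with $|A|=n$ and assume as induction hypothesis that every $B\in S$ with $|B|<n$ belongs to $\M_P$. By recurrent covering there is a ground instance $A\gets B_1,\ldots,B_m\in ground(P)$ of a clause of $P$ such that $B_1,\ldots,B_m\in S$, all $|B_i|$ are defined, and $|B_i|<|A|=n$ for $i=1,\ldots,m$. The induction hypothesis then gives $B_i\in\M_P$ for each $i$. Now I would invoke that $\M_P$ is a fixed point of $T_P$ (equivalently, a Herbrand model of $P$): since $A\gets B_1,\ldots,B_m$ is a ground instance of a clause of $P$ with all body atoms in $\M_P$, we get $A\in T_P(\M_P)=\M_P$. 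As $A$ was an arbitrary element of $S$, this establishes $S\subseteq\M_P$.

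I do not expect a genuine obstacle here; the argument is short once the induction is set up on the right ordering. The one point that wants a sentence of care is that the level mapping is only \emph{partial}: this is harmless precisely because the definition of ``recurrently covered'' forces $|A|$ and all the body levels used in the step to be defined, so the induction only ever evaluates $|\ |$ where it is defined. One could alternatively route through Theorem \ref{th:completeness}, noting that recurrently covered implies covered and hence that $P$ is semi-complete w.r.t.\ $S$; but then one would still have to produce, for each $A\in S$, a query with a finite SLD-tree, which amounts to re-exploiting the same well-founded ordering given by $|\ |$. The direct induction above is cleaner and avoids any detour through operational notions, which also fits the paper's emphasis on declarative reasoning.
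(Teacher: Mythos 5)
Your proof is correct and is essentially the same as the paper's: both proceed by strong induction on $|A|$, use recurrent covering to obtain a ground clause instance whose body atoms lie in $S$ at strictly smaller levels, and conclude $A\in\M_P$ from $T_P(\M_P)=\M_P$. Your additional remarks on the partiality of the level mapping and the rejected detour through semi-completeness are sound but not needed.
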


\begin{proof}\nopagebreak
Note first that $|A|$ is defined for all $A\in S$.
We show that $A\in \M_P$ for each  $A\in S$, by induction on $|A|$.
Let $n\geq0$,
assume that $B\in \M_P$ whenever $B\in S$ and $|B|<n$.
Take an $A\in S$ with $|A|=n$.
As $A$ is recurrently covered, there is a clause instance $A\gets\seq[m] B$
($m\geq0$)
with $B_i\in S$ and $|B_i|<n$ for each $i$.  So  $\seq[m] B\in \M_P$, 
and $A\in \M_P$.
\end{proof}

\begin{example}
\label{ex:graph}
Consider a directed graph $E$.  As a specification for a program describing
reachability in $E$, take 
$S = S_p\cup S_e$,
where
$
S_p = \{\, p(t,u) \mid
 \mbox{there is a path from } t \mbox{ to } u \mbox{ in } E \,\}
$,
and
$
S_e = \{\, e(t,u) \mid (t,u)\mbox{ is an edge in } E \,\}
$.
Let $P$ consist of 
a procedure $p$:
\[
  \begin{array}[t]{l}
    p(X,X). \\
    p(X,Z) \gets e(X,Y),\, p(Y,Z). \\
  \end{array}
\]  
and a procedure $e$,
which is a set of unary clauses describing the edges of
the graph.  Assume the latter is complete w.r.t.\ $S_e$.
Notice that when $E$ has cycles then 
infinite SLD-trees cannot be avoided, and
completeness of $P$ cannot be shown by 
Prop.\,\ref{prop:semi-compl}.

To apply Th.\,\ref{th:completeness:recu}, 
let us define a level mapping for the elements of $S$ such that
$|e(t,u)| = 0$ and
$|p(t,u)|$ is the length of a shortest path in $E$ from $t$ to $u$
 (so $|p(t,t)|=0$).
Consider an atom $p(t,u)\in S$ where $t\neq u$.
Let 
$t=t_0,\seq t=u$ be a shortest path from $t$ to $u$.
Then $e(t,t_1),p(t_1,u)\in S$,
$|p(t,u)|=n$, $|e(t,t_1)|=0$, and $|p(t_1,u)|=n-1$.
Thus $p(t,u)$ is recurrently covered by $P$ w.r.t.\ $S$ and $|\ |$.
The same trivially holds for the remaining atoms of $S$, as they are
instances of unary clauses of $P$.
By Th.\,\ref{th:completeness:recu}, $P$ is complete w.r.t.\ $S$.

Note that $P$ is not correct w.r.t.\ $S$, unless \HU is the set of nodes of $E$
(as $p(t,t)\not\in S$ when $t$ is not a node of the graph).
This, again, illustrates usefulness of approximate specifications.
\end{example}

Completeness of the presented completeness proof methods is discussed in
Appendix
\ref{appendix:compl}. 
In particular it is shown that, in a certain sense, the sufficient
condition for semi-completeness of Th.\,\ref{th:completeness}
 is a necessary condition for completeness.

\section{Pruning}
\label{sec:pruning}

Pruning some parts of SLD-trees is often used to improve efficiency of programs.
In Prolog it is implemented by using the cut, the if-then-else construct, or
by built-ins, like {\tt once/1}.  
Pruning preserves the correctness of a logic program, it also preserves
termination under a given selection rule,
but may violate the program's completeness.  
Consequently,
the answers of a program $P$ executed with pruning may be correct w.r.t.\ 
a specification $S$ for which $P$ is not correct;
a related programming technique is known as ``red cuts''
\cite{Sterling-Shapiro-short}.

This section deals with reasoning about program answers in the presence of
pruning. 
The focus is on proving completeness.
We begin with a formalization of pruning.
Then we introduce a sufficient condition for completeness,
and present examples.  
The next subsection identifies
a class of cases where the sufficient condition does not
hold (despite pruned trees being complete),
and shows how a completeness proof can be obtained anyway.
Then we present a sufficient condition for completeness dealing also with
infinite pruned trees.
We conclude with an approach to prove correctness of the answers of pruned trees.
In some examples it is assumed that the reader is familiar with the
operational semantics of the cut in Prolog.

\subsection{Pruned SLD-trees}

By a {\bf pruned SLD-tree} for a program $P$ and a query $Q$ we mean a tree
with the root $Q$
which is a connected subgraph of an SLD-tree for $P$ and $Q$.
By an  {\em answer} of a pruned SLD-tree we mean the computed answer of a
successful SLD-derivation which is a branch of the tree.
We will say that a pruned SLD-tree $T$ with root $Q$
is {\bf complete} w.r.t.\ a specification $S$
if, for any ground  $Q\theta$,
$S\models Q\theta$ implies that $Q\theta$ is an instance of an answer of $T$.
Informally, such a tree produces all the answers for $Q$ required by $S$.

%
%
%
%


\newcommand{\myellipse}{%
    \begin{pgfpicture}{-1.5cm}{-.5cm}{1.5cm}{.5cm}
      \pgfsetxvec{\pgfpoint{.5cm}{0cm}}
      \pgfsetyvec{\pgfpoint{0cm}{.5cm}}
      \pgfsetlinewidth{1pt}
      \pgfmoveto{\pgfxy(-3,0)}
      \pgfcurveto{\pgfxy(-3,-1)}{\pgfxy(3,-1)}{\pgfxy(3,0)}
      \pgfstroke
      \pgfmoveto{\pgfxy(3,0)}
      \pgfsetdash{{3pt}{3pt}}{0pt}
      \pgfcurveto{\pgfxy(3,1)}{\pgfxy(-3,1)}{\pgfxy(-3,0)}
      \pgfstroke
    \end{pgfpicture}
}
  \newcommand{\dblue}{\color[rgb]{0,0,.5}}
  \newcommand{\dgreen}{\color[rgb]{0,0.5,0}}



\newcommand{\mytreepicture}[1][{}]
{%
\begin{pgfpicture}{-2.5cm}{-.5cm}{2.5cm}{1.7cm}
      \pgfsetxvec{\pgfpoint{.5cm}{0cm}}
      \pgfsetyvec{\pgfpoint{0cm}{.5cm}}
  \pgfputat{\pgfxy(0,3)}{\pgfbox[center,center]
    {{\footnotesize\ldots,}$\underline A${\footnotesize,\ldots}}
  }
  \pgfline{\pgfxy(-0.6,2.5)}{\pgfxy(-5,0)}
  \pgfline{\pgfxy(-0.2,2.5)}{\pgfxy(-.5,0)}
  \begin{pgfscope}
    \pgfsetdash{{3pt}{1.5pt}}{0pt}
    \pgfline{\pgfxy(0.2,2.5)}{\pgfxy(0.5,0)}
    \pgfline{\pgfxy(0.6,2.5)}{\pgfxy(5,0)}
  \end{pgfscope}


  \pgfputat{\pgfxy(-4.9,.7)}
           {\pgfbox[center,center]{\mbox{\small\footnotesize$\dgreen\Pi_{#1}$}}}
  \pgfputat{\pgfxy(-1.1,1)}{
    \pgfbox[center,center]{\dgreen
      \begin{pgfmagnify}{.7}{.6}
          \myellipse
      \end{pgfmagnify}
      }
  }
  \pgfputat{\pgfxy(-1.2,1.2)}{{\pgfbox[center,center]{$\cdots$}}}
  \pgfputat{\pgfxy(+2.2,0.6)}{{\pgfbox[center,center]{$\cdots$}}}

  \pgfputat{\pgfxy(-2.9,1.9)}
           {\pgfbox[center,center]{\mbox{\small\footnotesize$\dblue P$}}}
  \pgfputat{\pgfxy(-2.4,1.3)}{\dblue
      \begin{pgfmagnify}{.8}{.5}
          \myellipse
      \end{pgfmagnify}
  }      
  \pgfputat{\pgfxy(2.5,-.5)}
           {\pgfbox[center,center]{\mbox{\small\footnotesize pruned}}}  
  \pgfputat{\pgfxy(-2.5,-.5)}
           {\pgfbox[center,center]{\mbox{\small\footnotesize not pruned}}}  
\end{pgfpicture}
} 

  %

%
%
%

%
%
%
%
%
  To facilitate reasoning about the answers of pruned SLD-trees,
  we will view pruning as applying only certain clauses 
  while constructing the children of a node.
  So we introduce subsets $\seq\Pi$ of $P$.  The intention is that
  for each node the clauses of one $\Pi_i$ are used.
  Programs $\seq\Pi$ may be not disjoint.%
\[
{\mytreepicture[i]}
\]

\begin{definition}
Given programs $\seq{\Pi}$ ($n>0$),
a {\bf c-selection rule} is a function%
\
assigning to a query $Q'$ an atom $A$
in $Q'$ and one of the programs $\emptyset,\seq{\Pi}$.

A {\bf csSLD-tree} (cs for clause selection) for a query
$Q$ and programs
$\seq\Pi$, via a c-selection rule $R$,
is constructed as an SLD-tree, but
for each node its children are constructed using the program selected by the
c-selection rule.
An {\em answer} of the csSLD-tree is the answer of a successful derivation
which is a branch of the tree.
\end{definition}

A c-selection rule may choose the empty program, thus 
making a given node a leaf.
Note that program $\Pi_i$ selected for a node $Q'$ may contain clauses
whose heads do not unify with the selected atom of $Q'$
(thus inapplicable in constructing the children of the node).
A csSLD-tree for $Q$ and $\seq\Pi$ is a pruned SLD-tree
for $Q$ and $\bigcup_i\Pi_i$.   Conversely, for each pruned SLD-tree $T$
for $Q$ and a (finite) program $P$ there exist $n>0$, and
$\seq\Pi\subseteq P$ such that 
 $T$ is a csSLD-tree for $Q$ and $\seq\Pi$.

    It may be needed that different atoms or programs are selected for repeated
    occurrences of the same query.
    So, formally, a c-selection rule is a function of a sequence of queries
    (a path in a tree) $Q,\ldots,Q'$.
    Also, strictly speaking, not an atom, but 
    an occurrence of an atom in $Q'$ is selected.
In some cases it may be convenient to treat a c-selection rule as a function
of not only the sequence of queries, but also of 
(the not pruned part of) the subtree rooted at $Q'$
(as often the decision about pruning is taken after a part of the subtree has
been computed).

\begin{example}
\label{ex:overlap}
Consider a Prolog program
\[
\newcommand{\prologif}{\mathrel{\texttt{:-}}}
\tt
c(L1,L2) \prologif\ m(X,L1), m(X,L2), !.
\qquad\qquad
m(X, [X|\_] \, ). \qquad\qquad
m(X, [\_|L] \, )\prologif\ m(X,L). 
\]
which is a logic program $P$ with the cut added.
Let  $\Pi_1 = P$ and \ $\Pi_2=\{m(X, [X|\_] \, ).\}$.
Consider a query $Q=c( [\seq[k] t], [\seq u])$, the SLD-tree $T$ for $Q$ via the
Prolog selection rule, and the csSLD-tree $T'$ which is
$T$ pruned due to the cut.
If $Q$ does not succeed then $T'=T$.
Otherwise $T'$ is $T$ without all the nodes to the right of the leftmost
successful branch.
So in $T'$, for a node $Q'$ with an $m$-atom selected, $\Pi_2$ is selected
whenever there is a success in the subtree rooted in the first child of $Q'$;
otherwise $\Pi_1$ is selected.
Note that the single cut may prune
children of many nodes of $T$.

The c-selection rule of $T'$ can be expressed as a function of a node:
  $\Pi_2$ is selected for a node 
$Q_i= m(X, [\SEQ t i k]), m(X, [\seq u])$ in $T'$
 if $t_i$ is unifiable with some element of $[\seq u]$.
Otherwise $\Pi_1$ is selected for $Q_i$.
For a node of the form $Q'_j=  m(t, [\SEQ u j n])$,
program $\Pi_2$ is selected if $t$ is unifiable with $u_j$,
otherwise $\Pi_1$ is selected.
\end{example}

In this example, determining the c-selection rule corresponding to the cut
was not straightforward.  
Simpler cases are provided by Examples 
\ref{ex:adam}, \ref{ex:pruning1}, \ref{ex:pruning2} below.

\subsection{Completeness of pruned trees}

We begin with an example illustrating some of the difficulties related to
completeness of pruned csSLD-trees.  Then, after introducing some auxiliary
notions, a sufficient condition for completeness is presented.
A few example proofs follow.  
We conclude with an example of a completeness proof in a case for which the
sufficient condition is not applicable directly.

\begin{example}
\label{ex:prune}%
We show that completeness of each of $\seq\Pi$ is not sufficient for
completeness of a csSLD-tree for $\seq\Pi$.
   Consider a program $P$:
\vspace*{-1.5\abovedisplayskip}
\[
\hspace{-3em}
 \begin{minipage}{.35\textwidth}%
    \begin{eqnarray}
      &&
\label{Excl1}
    q(X)\gets p(Y,X). \\
      &&
\label{Excl2}
     p(Y,0).
    \end{eqnarray}
  \end{minipage}
\qquad\quad
  \begin{minipage}{.4\textwidth}
    \begin{eqnarray}  
\label{Excl3}  &&
     p(a,s(X))\gets p(a,X).      \\
\label{Excl4} &&
      p(b,s(X))\gets p(b,X).
    \end{eqnarray}
  \end{minipage}
\]
and programs $\Pi_1= \{(\ref{Excl1}), (\ref{Excl2}), (\ref{Excl4})\}$,
$\Pi_2= \{(\ref{Excl1}), (\ref{Excl2}), (\ref{Excl3})\}$,
    As a specification for completeness consider
    $S_0=\{\,q(s^j(0)) \mid j\geq0 \,\}$.
  Each of the programs  $\Pi_1, \Pi_2,P$ is complete w.r.t.\ $S_0$.
    Assume a c-selection rule $R$ choosing alternatively $\Pi_1,\Pi_2$ along
    each branch of a tree. 
    Then the csSLD-tree for $q(s^j(0))\in S_0$ via $R$ (where $j>2$) 
    has no answers,
    thus the tree is not complete w.r.t.\ $S_0$.
    Similarly the csSLD-tree for $q(X)$ is not complete.
\end{example}

Consider programs $P,\seq\Pi$ and specifications $S,\seq S$, such that 
$P\supseteq\bigcup_{i=1}^n\Pi_i$ and $S=\bigcup_{i=1}^n S_i$.
The intention is that each $S_i$ describes which answers are to be produced by
using $\Pi_i$ in the first resolution step.
We will call $\seq {\Pi}$, $\seq S$ a {\bf split} (of $P$ and $S$).

\begin{definition}
\label{def:suitable}
Let  $\S =\seq {\Pi}$, $\seq S$ be a split, and $S=\bigcup S_i$.
Specification $S_i$ is {\bf suitable} for an atom $A$
w.r.t.\ \S
when no instance of $A$ is in $S\setminus S_i$.
(In other words, when $ground(A)\cap S \subseteq S_i$.)
We also say that a program $\Pi_i$ is {\bf suitable} for $A$ w.r.t.\ \S
when $S_i$ is.

A c-selection rule is {\bf compatible} with \S if for each non-empty query
$Q'$ it selects an atom $A$
and a program $\Pi$, such that 

\quad
-- $\Pi\in\{\seq\Pi\}$ is suitable for $A$ w.r.t.\ \S, or

\quad
-- none of $\seq\Pi$ is suitable for $A$ w.r.t.\ \S and  $\Pi=\emptyset$
(so $Q'$ is a leaf).

A csSLD-tree for $\seq {\Pi}$ via a c-selection rule compatible with \S
is said to be {\bf weakly compatible} with \S.
The tree is {\bf compatible} with \S when 
for each its nonempty node some $\Pi_i$ is selected.
\end{definition}

  The intuition is that 
  when $\Pi_i$ is suitable for $A$
  then $S_i$ is a fragment of $S$ sufficient to deal with $A$.
  It describes all the answers for query $A$ required by~$S$.

The reason of incompleteness of the trees in Ex.\,\ref{ex:prune}
may be understood 
as selecting a $\Pi_i$ not suitable for the selected atom.
Take  $\S=\Pi_1,\Pi_2, S_0\cup S_1',S_0\cup S_2'$,
where
\mbox
{$S_1' = \{\,p(b,s^i(0)) \mid i\geq0 \,\}$} and
$S_2' = \{\,p(a,s^i(0)) \mid i\geq0 \,\}$.
In the incomplete trees,
$\Pi_1$ is selected for an atom $A=p(a,u)$,
or $\Pi_2$ is selected for an atom $B=p(b,u)$ (where $u\in\TU$).
However
$\Pi_1$ is not suitable for $A$ 
whenever $A$ has an instance in $S$
(as then $ground(A)\cap S \subseteq S_2'$, 
thus $ground(A)\cap S \not\subseteq S_0\cup S_1'$);
similarly for $\Pi_2$ and $B$.

  When  $\Pi_i$ is suitable for  $A$ then if
    each atom of $S_i$ is covered by $\Pi_i$
  (w.r.t.\ $S$)
  then using for $A$ only the clauses of $\Pi_i$ does not impair
  completeness w.r.t.~$S$:

\newcommand{\thcompletenesspruned}[1]
{%
    Let $P\supseteq\bigcup_{i=1}^n\Pi_i$ (where $n>0$) be a program, 
    $S=\bigcup_{i=1}^n S_i$ a specification, and 
    $T\!$ a csSLD-tree for $\seq\Pi$.
    If
    \begin{enumerate}
    \item 
    \label{prop:cssld.complete.cond0}
        for each $i=1,\ldots,n$, 
        all the atoms from $S_i$ are covered by $\Pi_i$ w.r.t.\ $S$, and 
    \item
    \label{prop:cssld.complete.cond1}
        $T$ is compatible with $\seq\Pi,\seq S$,
    \item 
    \label{prop:cssld.complete.cond2}
    \begin{enumerate}
      \item 
      \label{prop:cssld.complete.cond2a}
      #1
      \item 
      \label{prop:cssld.complete.cond2b}
          $P$ is recurrent, or
      \item 
      \label{prop:cssld.complete.cond2c}
            $P$ is acceptable and 
          $T$ is built under the Prolog selection rule 
    \end{enumerate}
    \end{enumerate}
    then $T$ is complete w.r.t.\ $S$.
} %

\begin{theorem}
\label{th:completeness:pruned}%
\label{prop:cssld.complete}%
\thcompletenesspruned{         $T$ is finite, or}
\end{theorem}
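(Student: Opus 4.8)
The plan is to show directly that $T$ is complete w.r.t.\ $S$, i.e.\ that for every ground instance $Q\theta$ of the root $Q$ of $T$ with $S\models Q\theta$, the query $Q\theta$ is an instance of an answer of $T$. I would construct a successful branch of $T$ top-down, starting from $Q$, while maintaining a ground ``shadow'' of the current query: at the node reached so far, with nonempty query $Q'$, I keep a ground substitution $\rho$ such that $Q'\rho$ is ground, every atom of $Q'\rho$ lies in $S$, and $Q\sigma'\rho=Q\theta$, where $\sigma'$ is the composition of the mgus used along the branch so far (so that once the empty query is reached, $\sigma'$ restricted to the variables of $Q$ is a computed answer substitution whose answer $Q\sigma'$ has $Q\theta$ as an instance). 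The crucial point is that the branch can always be extended past a nonempty node. Since $T$ is compatible with $\S=\Pi_1,\dots,\Pi_n,S_1,\dots,S_n$ (hypothesis~2), the c-selection rule of $T$ picks at the current node an atom $A$ of $Q'$ together with one of the programs $\Pi_j$ --- never $\emptyset$, by compatibility --- that is suitable for $A$ w.r.t.\ $\S$, i.e.\ $ground(A)\cap S\subseteq S_j$. Then $A\rho$ is a ground atom of $Q'\rho$, so $A\rho\in S$ and hence $A\rho\in S_j$; by hypothesis~1 the atom $A\rho$ is covered by $\Pi_j$ w.r.t.\ $S$, so there is a clause $C\in\Pi_j$ with a ground instance $A\rho\gets B_1,\dots,B_m$ whose body atoms $B_1,\dots,B_m$ all belong to $S$. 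After renaming $C$ apart, its head unifies with $A$, so resolving $A$ with $C$ yields a child of the current node \emph{in $T$}, since the children of that node are constructed with the selected program $\Pi_j$ and $C\in\Pi_j$. A routine lifting argument then supplies a ground substitution for that child re-establishing the invariant: the atoms of the new shadow query are precisely those of $Q'\rho$ other than $A\rho$, together with $B_1,\dots,B_m$, and all of them lie in $S$. Hence the construction cannot halt except at the empty query, and there the branch is successful --- so it only remains to see that the construction terminates.

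This is where the three alternatives of hypothesis~3 enter. If $T$ is finite, every branch of $T$ is finite, so the constructed branch is. If $P$ is recurrent w.r.t.\ a level mapping $|\ |$, then at every step $|A\rho|>|B_l|$ for each $l$, by recurrence applied to the ground instance $A\rho\gets B_1,\dots,B_m$ of a clause of $\Pi_j\subseteq P$; hence the finite multiset of natural numbers $\{\,|B|\mid B\text{ is an atom of }Q'\rho\,\}$ strictly decreases along the construction in the well-founded multiset ordering, so the construction is finite. If $P$ is acceptable and $T$ is built under the Prolog (leftmost) selection rule, I would observe that the sequence of shadow queries $Q\theta=Q_0\rho_0,\,Q_1\rho_1,\dots$ is, by the construction, a leftmost SLD-derivation of the ground query $Q\theta$ for the program $ground(P)$ --- each step replaces the leftmost atom by the body of a ground clause instance, with identity mgu. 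Since $\M_{ground(P)}=\M_P$, the program $ground(P)$ is acceptable w.r.t.\ the same level mapping and specification as $P$, and therefore all LD-derivations of a ground query for $ground(P)$ are finite (cf.\ the proof of Prop.~\ref{prop:semi-compl} and \cite{Apt-Prolog}); so again the construction is finite.

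The conceptual core is the observation that hypothesis~2 is exactly what forces the selected program to be suitable for the selected atom, which keeps the ground shadow inside $S$ and lets hypothesis~1 be invoked at each step --- the failure of completeness in Ex.~\ref{ex:prune} being precisely a violation of this. The main technical obstacle I anticipate is the lifting bookkeeping: the invariant has to be arranged so that the shadow is genuinely a (leftmost, in the acceptable case) SLD-derivation, so that the termination facts cited from \cite{Apt-Prolog} apply verbatim, and so that the computed answer of the completed branch provably has $Q\theta$ as an instance. This is routine but delicate, chiefly regarding renaming apart of clause variables and the exact relation between the mgus and the grounding substitutions $\rho$.
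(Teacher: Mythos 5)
Your proposal is correct and follows essentially the same route as the paper's proof (Lemma~\ref{lemma:cssld.complete} in Appendix~\ref{appendix}): a branch of $T$ is grown in lockstep with a ground derivation for $Q\theta$ over $ground(P)$ whose queries stay inside $S$, using compatibility to force the selected $\Pi_i$ to be suitable and condition~1 to supply a covering clause instance, with the lifting theorem tying the ground derivation to the branch and its computed answer. Your termination arguments (finiteness of $T$, the multiset ordering for the recurrent case, and acceptability of $ground(P)$ for LD-derivations of the ground shadow) are just a slightly more explicit rendering of the paper's appeal to finiteness of SLD-/LD-trees for ground queries of recurrent/acceptable programs.
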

See Appendix \ref{appendix} for a proof.
Note that in \ref{prop:cssld.complete.cond2c}
the program may be acceptable w.r.t.\ a specification distinct from $S$.

\begin{example}
\label{ex:adam}
A Prolog program \cite{clocksin-mellish-1ed}
\[
{\tt
n o p(a d a m,0) \texttt{ :- !.}
\qquad
n o p(eve,0) \texttt{ :- !.}
\qquad
n o p(X,2).
}
\]
is an example of difficulties and dangers of using the cut in Prolog.
(For its (in)correctness, see Ex.\,\ref{ex:pruning:correctness}.)
Due to the two occurrences of the cut, for an atomic query $A$
only the first clause with the head unifiable with $A$ will
be used.
The program can be seen as logic program $P=\Pi_1\cup\Pi_2\cup\Pi_3$ executed
with pruning, where (for $i=1,2,3$)
$\Pi_i$ is the $i$-th clause of the program
with the cut removed.
The intended meaning is $S=S_1\cup S_2\cup S_3$, where 
$S_1 = \{ n o p(a d a m,0) \}$, 
$S_2 = \{ n o p(eve,0) \}$, and
$S_3 = \left\{ \rule{0pt}{1.9ex}
     n o p(t,2)\in\HB \mid t\not\in\{a d a m,eve\} \right\}$.
Note that all the atoms from $S_i$ are covered by $\Pi_i$ (for $i=1,2,3$).

Let \S be $\Pi_1,\Pi_2,\Pi_3,S_1, S_2, S_3$.  
Consider a query $A=n o p(t,Y)$ with a ground $t$.
If $t=a d a m$ then only $\Pi_1$ is suitable for $A$ w.r.t.\ \S,
as $ground(A)\cap S = S_1$.
If $t=eve$ then only  $\Pi_2$ is suitable.
For $t\not\in\{a d a m,eve\}$ the suitable program is $\Pi_3$.
So for a query $A$ the pruning due to the cuts in the program results in 
selecting a suitable $\Pi_i$, and the obtained
csSLD-tree is compatible with \S. 
By Th.\,\ref{th:completeness:pruned}
the tree is complete w.r.t.\ $S$.

For a query $n o p(X,Y)$ or $n o p(X,0)$ only the first clause, i.e.\ $\Pi_1$,
is used.  However $\Pi_1$ is not suitable for the query (w.r.t.\ \S),
and the csSLD-tree is not compatible with \S.  The premises of 
Th.\,\ref{th:completeness:pruned} do not hold, and 
the tree is not complete (w.r.t.\ S).
\end{example}

\begin{example}
\label{ex:pruning1}%
The following program SAT0 is a simplification of a fragment of the SAT solver 
of \cite{howe.king.tcs-shorter}
discussed in [\citeNP{Drabent12.iclp};\,\citeyearNP{drabent.corr.2012}].
Pruning is crucial for the efficiency and usability of the original program.
\begin{quote}
\mbox{}\hspace{-3em}
  \begin{minipage}[t]{.5\textwidth}%
\vspace{-3\abovedisplayskip}
    \begin{eqnarray}
      &&
\label{SCcl1}
      p( P\mydash P, \, [\,] ).
          \\
      &&
\label{SCcl2}
      p( V\mydash P, \, [B|T] ) \gets q( V\mydash P, \, [B|T] ).   \\
      &&
\label{SCcl3}
      p( V\mydash P, \, [B|T] ) \gets q( B, \, [V\mydash P|T] ).   
    \end{eqnarray}
  \end{minipage}
  \begin{minipage}[t]{.4\textwidth}%
\vspace{-3\abovedisplayskip}
    \begin{eqnarray}
      &&
\label{SCcl4}
      q( V\mydash P, \, \mbox{\LARGE\_}\, )\gets V=P.                 \\
      &&
\label{SCcl5}
      q(\, \mbox{\LARGE\_}\, , \, [A|T] )\gets p( A, T).              \\
      &&
\label{SCcl6}
      P = P.
    \end{eqnarray}
  \end{minipage}  
\end{quote}
The program is complete w.r.t.\ a specification
\[
S =
\left.\left
\{\, 
\begin{array}{l}
  p(t_0\mydash u_0,[t_1\mydash u_1,\ldots, t_n\mydash u_n]),      \\
  q(t_0\mydash u_0,[t_1\mydash u_1,\ldots, t_n\mydash u_n])    
\end{array}
\, \right| \,
\begin{array}{l}
    n\geq0,
    \ t_0,\ldots,t_n,u_0,\ldots,u_n \in\mathbb{T},
    \\
    t_i=u_i \mbox{ for some } i\in \{0,\ldots,n\}\,
\end{array}
\right
\}\cup S_=
\]
where $\mathbb T = \{{\it false}, {\it true}\}\subseteq\HU$, and
$S_= = \{\, t{=}t \mid t\in\HU \,\}$.
We omit a completeness proof,
mentioning only that SAT0 is recurrent w.r.t.\ a level mapping
$|p(t,u)| = 2|u|+2$, 
$|q(t,u)| = 2|u|+1$, $|{=}(t,u)|=0$,
where $|u|$ for $u\in \HU$ is defined in Ex.\,\ref{ex:split-compl}.

The first case of pruning is related to redundancy within 
(\ref{SCcl2}), (\ref{SCcl3});
 both
$\Pi_1={\rm SAT0}\setminus \{(\ref{SCcl3})\}$ and 
$\Pi_2={\rm SAT0}\setminus \{(\ref{SCcl2})\}$
are complete w.r.t.\ $S$.
For any selected atom at most one of (\ref{SCcl2}), (\ref{SCcl3}) is
used, and the choice is dynamic.
As the following reasoning is independent from this choice, we omit further
details.

So in 
such pruned SLD-trees
the children of each node are constructed using one of programs
$\Pi_1, \Pi_2$.
Thus they are csSLD-trees for $\Pi_1, \Pi_2$.
They are compatible with $\S= \Pi_1, \Pi_2, S, S$
(as $\Pi_1, \Pi_2$ are trivially suitable for any $A$, due to 
 $S\setminus S_i=\emptyset$ in Df.\,\ref{def:suitable}).
Each atom of $S$ is covered w.r.t.\ $S$ both by $\Pi_1$ and $\Pi_2$.
As SAT0 is recurrent,
by Th.\,\ref{th:completeness:pruned}
each such tree is complete w.r.t.\ $S$.

\end{example}

\begin{example}
\label{ex:pruning2}%
We continue with program SAT0 and specification $S$ from the previous
example, and add
 a second case of pruning.
When the selected atom is of the form $A=q(s_1,s_2)$ with a ground $s_1$ then
only one of 
clauses (\ref{SCcl4}), (\ref{SCcl5}) is needed --
(\ref{SCcl4}) when $s_1$ is of the form $t\mydash t$, and  (\ref{SCcl5})
otherwise.
The other clause can be abandoned without losing the completeness w.r.t.\ $S$.%
\footnote
{
 The same holds for $A$ of the form  $q(s_{11}\mydash s_{11},s_2)$, or
 $q(s_{11}\mydash s_{12},s_2)$ with non-unifiable $s_{11}$, $s_{12}$.
 The pruning is implemented using the if-then-else construct in
 Prolog: \
{\tt q(V-P,[A|T]) :-  V=P ->  true ; p(A,T).}
 (And\,
{\tt p(V-P,[B|T]) :- non{}var(V) -> q(V-P,[B|T])  ; q(B,[V-P|T])}\,
  implements the first case of pruning.)
}

 Actually, SAT0 is included in a bigger program, say $P={\rm SAT0}\cup\Pi_0$.
 We skip the details of $\Pi_0$, let us only state that $P$ is recurrent,
 $\Pi_0$ does not
 contain any clause for $p$ or for $q$, and that
 $P$ is complete w.r.t.\ a specification $S'=S\cup S_0$ where
 $S_0$ does not contain any $p$- or $q$-atom.
 (Hence each atom of $S_0$ is covered by $\Pi_0$ w.r.t.\ $S'$.)

To formally describe the trees for $P$ resulting from both cases of pruning,
consider the following programs and specifications:
\[
\begin{tabular}[t]{l @{\ \ }l}
   $\Pi_1 =  \{(\ref{SCcl1}),(\ref{SCcl2})\}$, \
    $\Pi_2 =  \{(\ref{SCcl1}),(\ref{SCcl3})\}$,
&       $S_1=S_2=S\cap \{\,p(s,u)\mid s,u\in\HU\,\}$,
\\
  $\Pi_3 =  \{(\ref{SCcl4})\}$,
&       $S_3=S\cap\{\,q(t\mydash t,s)\mid t,s\in\HU \,\}$,
\\
   $\Pi_4 =  \{(\ref{SCcl5})\}$,
&       $S_4= S\cap\{\,q(t\mydash u,s)\mid  t,u,s\in\HU,t\neq u \,\}$,
\\
   $\Pi_5 =  \{(\ref{SCcl6})\}$,
&  $S_5 = S_=$.
\end{tabular}
\]
Let \S be $\Pi_0,\ldots,\Pi_5,S_0,\ldots,S_5$.
Each atom from $S_i$ is covered by $\Pi_i$ w.r.t.\ $S'$ (for $i=0,\ldots,5$).
For each $q$-atom $A$ with its first argument ground,
$\Pi_3$ or $\Pi_4$ is suitable.
(Both $\Pi_3,\Pi_4$ are suitable when $ground(A)\cap S = \emptyset$.)
For each remaining atom from \TB
(at least) one of programs $\Pi_0,\Pi_1,\Pi_2,\Pi_5$ is suitable.

Consider a pruned SLD-tree $T$ for $P$ (employing the two cases of pruning
described above).  
Assume that each $q$-atom selected in $T$ has its first argument ground.
Then $T$ is a csSLD-tree compatible with \S.
From Th.\,\ref{th:completeness:pruned} it follows that $T$ is complete
w.r.t.\ $S'$.

The restriction on the selected $q$-atoms is
implemented by means of Prolog delays.  This is done in such a way 
that, for the intended initial queries, floundering is avoided
\cite{howe.king.tcs-shorter}
(i.e.\ an atom is selected in each query).
So the obtained pruned trees are as $T$ above, and
 the pruning preserves completeness of the program.
\end{example}

\subsubsection{Sufficient condition not applicable directly}

Th.\,\ref{th:completeness:pruned} has an additional consequence.
Not only $T$ but also any its subtree is complete w.r.t.\ $S$
(as the subtree satisfies the premises of Th.\,\ref{th:completeness:pruned}).
So the theorem is not (directly) applicable when a csSLD-tree is complete,
but pruning removes some answers for a query within the tree
(cf.\ Ex.\,\ref{ex:compl.pruning.inapplicable} below).
In Ex.\,\ref{ex:pruning.proved} we show how, at least in some such cases,
completeness can be proved by employing a family of specifications.

\begin{example}
\label{ex:compl.pruning.inapplicable}
Consider the logic
  program $P$ from Ex.\,\ref{ex:overlap}.
Predicate $c$ of $P$ is complete w.r.t.\ a specification
\[
S_c = \left\{\, c(l_1,l_2)\in\HB \: \left|
            \begin{tabular}{@{\ }l@{}}
              $l_1,l_2$  are lists with\\ a common element
            \end{tabular}\!
        \right.\,\right\}.
\]
The pruning does not impair completeness for ground queries of the form
$Q=c(l_1,l_2)$.
To apply Th.\,\ref{th:completeness:pruned} to prove the completeness,
each atom of $S_c$ has to be covered by the first clause of $P$ w.r.t.\ 
a specification $S$.
Hence
$S  \supseteq S_c\cup S_m$ is needed where
$S_m = \{\, m(t_i,[\seq t])\in\nolinebreak\HB  \mid n>0,\ 1\leq i \leq n \,\}$.

Consider now a pruned tree for such $Q$, and its subtree $T''$ rooted at
$Q'= m(X,l_1), m(X,l_2)$.  
The subtree is not complete whenever $l_1,l_2$ have more than one common
element. 
Thus, for each such case, Th.\,\ref{th:completeness:pruned} is inapplicable
for any superset of $S_c$.%
\footnote{
  It is not applicable in a more general case of $l_1,l_2$ having a common
  element, which is not the last element of $l_1$.
  The pruned tree for such $Q$ contains a node with 
  $A= m(Y, [\SEQ t i k])$ and $\Pi_2=\{\,m(X, [X|\_] \, ).\}$ selected
  (where $\SEQ t i k$ are ground, and $t_i$ is a member of $l_2$).
  For such atom $A$, a suitable specification $S_2$ is a superset of
  $ground(A)\cap S \supseteq \{\,m(t_j, [\SEQ t i k]) \mid i\leq j \leq k\,\}$.
    But such $S_2$ cannot be covered by $\Pi_2$,
  unless all $ \SEQ t i k$ are equal.
  So the premises of Th.\,\ref{th:completeness:pruned} are not satisfied.

  Note that this reasoning holds for any representation of the considered
  pruned trees as csSLD-trees, possibly employing some other $n,\seq\Pi$.
  } %

\end{example}

\begin{example}
\label{ex:pruning.proved}
The previous example shows a case where completeness of pruned trees 
w.r.t.\ a specification cannot be shown by Th.\,\ref{th:completeness:pruned}.
However the completeness can be proved by
 employing a family of specifications.
Consider the same program $P$ and specification $S_c$.
For each $c(l_1,l_2)\in S_c$ we construct a specification $S'$
for which $P$ is semi-complete,  $c(l_1,l_2)\in S'$,
and Th.\,\ref{th:completeness:pruned} is applicable.
Informally, $S'$ may be understood as the subset of $\M_P$ needed to obtain
$c(l_1,l_2)$, 
by means of the clauses involved in a pruned tree for $c(l_1,l_2)$.

Let $l_1=\nolinebreak{}[\seq[k] t],\ l_2=[\seq u]$.
So there exist a term $t$ and indices $g,h$ such that $t=t_g=u_h$,
for $i=1,\ldots,g-1$ no $t_i$ is a member of $l_2$, and 
$u_j\neq t$  for $j=1,\ldots,h-1$.  Let
$S' = \{\, c(l_1,l_2) \,\} \cup S'_m$ where
\[
S'_m =
\{\,
  m(t,[\SEQ t i k]) \mid 1\leq i \leq g
\,\} \cup
\{\,
  m(t,[\SEQ u j n]) \mid 1\leq j \leq h
\,\}.
\]  
Let $\S = \Pi_1,\Pi_2,S_1,S_2$, where $\Pi_i$ are as in Ex.\,\ref{ex:overlap},
$S_1=S'$, 
$S_2= \{\,
  m(t,[\SEQ t g k]),\ \linebreak[3]
  m(t,[\SEQ u h n]) 
\,\}
$.
All the atoms from $S_i$ are covered by $\Pi_i$ w.r.t.\ $S'$ (for $i=1,2$).
Consider the csSLD-tree $T'$ for the (ground) query $Q = c(l_1,l_2)$.
From the description in Ex.\,\ref{ex:overlap} of the c-selection rule,
it follows that whenever $\Pi_2$ is selected together with an atom $A$ in $T'$
then $A=  m(X,[\SEQ t g k])$ (for some variable $X$) or $A= m(t,[\SEQ u h n])$.
In both cases $ground(A)\cap S' \subseteq S_2$.
Thus $T'$ is compatible with \S and, as it is finite, 
by Th.\,\ref{th:completeness:pruned} $T'$ is complete w.r.t.\ $S'$.

We showed that, for each $Q\in S_c$, the pruned tree is complete w.r.t.
$\{Q\}$.
Hence for each ground query of the form $Q=m(s_1,s_2)$ 
the csSLD-tree for $Q$ is complete w.r.t.\ $S_c$.

\end{example}

\subsubsection{Another sufficient condition for completeness}

By generalizing
Th.\,\ref{th:completeness:recu} to pruned SLD-trees, we obtain a sufficient
condition for completeness, which does not require that the trees are finite.
See Appendix \ref{appendix} for a proof.

\newcommand{\propcompletenesspruned}
{%
    Let $P \supseteq \bigcup_{i=1}^n\Pi_i$ (\/$n>0$) be a program, 
    $S=\bigcup_{i=1}^n S_i$ a specification,
     $|\ |\colon \HB\partto \NN$ a level mapping,  and 
    $T\!$ a csSLD-tree for $\seq\Pi$.
    If
    \begin{enumerate}
    \item
        $T$ is compatible with $\seq\Pi,\seq S$,
    \item 
        each atom from $S_i$ is recurrently covered by $\Pi_i$ w.r.t.\ $S$
        and $|\ |$,
        for each $i=1,\ldots,n$, 
    \end{enumerate}
    then $T$ is complete w.r.t.\ $S$.
} %

\begin{proposition}
\label{prop:cssld.complete2}%
\propcompletenesspruned
\end{proposition}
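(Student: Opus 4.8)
The plan is to combine the inductive argument of Theorem~\ref{th:completeness:recu} with the tree-structural reasoning underlying Theorem~\ref{th:completeness:pruned}. First I would fix an arbitrary ground instance $Q\theta$ of the root $Q$ with $S\models Q\theta$; writing $Q\theta = A_1,\ldots,A_k$, each $A_j$ is a ground atom in $S$, hence $|A_j|$ is defined (since recurrent covering of an atom presupposes its level is defined), and so is the multiset $\{|A_1|,\ldots,|A_k|\}$. I would prove by well-founded induction on this multiset (or equivalently on $\sum_j |A_j|$, or on $\max_j|A_j|$ with an inner induction on $k$) that $Q\theta$ is an instance of an answer of $T$. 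The base case $k=0$ is the empty query, which is trivially its own answer at the root.

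For the inductive step, I would look at the node of $T$ at which the derivation begins --- more precisely, since a c-selection rule may depend on the path, I would reason along the branch of $T$ starting at $Q$ --- and let the c-selection rule pick an atom, say $A_m$, and a program $\Pi$. Here the compatibility hypothesis is essential: because $Q\theta$ witnesses that $A_m$ has an instance in $S$ (namely $A_m\theta \in S$), $\Pi$ cannot be $\emptyset$, so $\Pi = \Pi_i$ for some $i$, and moreover $\Pi_i$ is suitable for $A_m$, i.e.\ $ground(A_m)\cap S \subseteq S_i$; in particular $A_m\theta \in S_i$. By hypothesis~2, $A_m\theta$ is recurrently covered by $\Pi_i$ w.r.t.\ $S$ and $|\ |$: there is a ground instance $A_m\theta \gets B_1,\ldots,B_r$ of a clause of $\Pi_i$ with all $B_l\in S$ and $|A_m\theta| > |B_l|$ for every $l$. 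I would then lift this ground clause instance to a resolution step in $T$: the clause used has a variant whose head unifies with $A_m$; since $A_m\theta \gets \vec B$ is a ground instance of that clause and $A_m\theta$ is an instance of $A_m$, a standard lifting lemma gives a child $Q'$ of the current node obtained by resolving $A_m$ away with this clause, such that the new ground query $A_1\theta,\ldots,A_{m-1}\theta, B_1,\ldots,B_r, A_{m+1}\theta,\ldots,A_k\theta$ is a ground instance of $Q'$. All atoms of this new ground query are in $S$, and its level multiset is obtained from that of $Q\theta$ by replacing $|A_m\theta|$ with the strictly smaller values $|B_1|,\ldots,|B_r|$ --- hence strictly smaller in the multiset order. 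Applying the induction hypothesis to $Q'$ (with this ground instance) yields that the ground query is an instance of an answer of the subtree of $T$ at $Q'$, and composing that answer with the step just taken shows $Q\theta$ is an instance of an answer of $T$.

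The main obstacle I anticipate is bookkeeping around the lifting step and the path-dependence of the c-selection rule, rather than anything conceptually deep. Specifically: (i) a c-selection rule is a function of the sequence of queries along a branch, so the induction must be carried out not just ``at a node'' but along the branch determined so far --- this is handled by phrasing the induction hypothesis relative to an arbitrary node of $T$ and the path leading to it, exactly as in the proof of Theorem~\ref{th:completeness:pruned}; (ii) one must make sure the clause instance supplied by recurrent covering really does correspond to an edge present in $T$ (not pruned away) --- this is guaranteed because the c-selection rule selected $\Pi_i$ for this node, so \emph{all} children via clauses of $\Pi_i$ are present, and our clause is in $\Pi_i$; (iii) the level mapping being \emph{partial} means I must check at each stage that the relevant levels are defined, which follows because every atom we ever touch lies in $S$ and recurrent covering of the atoms of $S$ forces their levels to be defined. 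None of the finiteness conditions of Theorem~\ref{th:completeness:pruned} are needed here: the strict decrease of the level multiset along the constructed branch does the work that finiteness did before, so the argument goes through even for infinite $T$. Finally, completeness of $T$ w.r.t.\ $S$ is immediate from the special case $Q\theta$ with $Q$ the root, quantified over all such ground instances.
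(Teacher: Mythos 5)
Your proof is correct and in substance identical to the paper's: the paper reuses the branch-and-ground-derivation construction from the proof of Theorem~\ref{th:completeness:pruned} and then concludes that the ground derivation is finite because the level multisets of its queries form a $\prec_m$-decreasing sequence in a well-founded order, whereas you package exactly the same ingredients (compatibility of $T$ $\Rightarrow$ a suitable $\Pi_i$ is selected, recurrent covering $\Rightarrow$ a ground clause instance with strictly smaller body levels, lifting $\Rightarrow$ a corresponding child in $T$) as a direct well-founded induction on the level multiset. One small point: the reason $\Pi\neq\emptyset$ at each nonempty node is simply that $T$ is assumed \emph{compatible} rather than merely weakly compatible, not that the selected atom has an instance in $S$ --- an atom can have instances in $S$ without any single $S_i$ containing all of them, in which case a weakly compatible rule would make the node a leaf.
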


\begin{example}
Consider the program $P$, specification $S$ and the level mapping $|\ |$ from
Ex.\,\ref{ex:graph}.
 Consider a c-selection rule $R$ which selects the
first atom of the query, and if the atom is of the form
$p(t,t)$ it selects $\Pi_1=\{\,  p(X,X)  \,\}$,
if it is of the form $p(t,u)$ with distinct ground $t,u$ then it selects
$\Pi_2=\{\,     p(X,Z) \gets e(X,Y),\, p(Y,Z)   \,\} $,
otherwise the whole program is selected,  $\Pi_3=P$.
Let $S_1=\{\, p(t,t) \mid t \mbox{ is a node of } E \,\}$,
\linebreak[3]%
\mbox{$S_2=\{\, p(t,u)\in S \mid t\neq u \,\}$}, $S_3=S$.
Note that $R$ is compatible with $\S= \Pi_1,\Pi_2,\Pi_3, S_1,S_2,S_3$
 (as in the first case 
 $ground(p(t,t))\cap S \subseteq S_1$, and in the second case
 $\{p(t,u)\}\cap S \subseteq S_2$).
{\sloppy\par}

From  Ex.\,\ref{ex:graph} it follows that each atom from $S_i$
is recurrently covered by $\Pi_i$ w.r.t.\ $S$, for $i=1,2,3$.
Thus by Prop.\,\ref{prop:cssld.complete2},
any csSLD-tree for $P$ via $R$ is complete w.r.t.\ $S$.
Note that the tree may have infinite branches.

Assume that the initial queries are ground.  Then, 
under the Prolog selection rule,
all the selected $p$-atoms are ground.  So the c-selection rule $R$ 
can be implemented in Prolog by adding the cut to the first clause of~$P$:
{\tt p(X,X):-!.}
However when the csSLD-tree is infinite, not the whole tree may searched by
Prolog due to the depth-first search strategy.

\end{example}

\subsection{Correctness of answers of pruned trees}
The answers of pruned trees are, obviously, logical consequences of the program.
However, it may happen that, due to pruning, not all such consequences are
obtained.
This feature is sometimes used in Prolog programming; it is called ``red cuts''
\cite{Sterling-Shapiro-short}.
In this section we show how to characterize more precisely the answers in
such cases.  We present a sufficient condition for the answers of csSLD-trees to 
be correct (cf.\ Df.\,\ref{def:corr:compl:aux}),
 w.r.t.\ a specification for which the program may be not correct.
Informally, the idea is to apply the condition of
Th.\,\ref{th:correctness} 
only to those instances of program clauses which were employed in the pruned
trees.

\begin{theorem}
\label{th:cssld.correct}
Let $T$ be a csSLD-tree for a query $Q$ and programs $\seq\Pi$, and $S$ be a
specification. 
If for each $i=1,\ldots n$
\[
\begin{minipage}{.92\textwidth}
\raggedright
  $\seq[m]B\in S$ implies   $H\in S$, 
  for each $(H\gets\seq[m]B)\in ground(\Pi_i)$ ($m\geq0$)
  where $H$ is an instance of some atom $A$ selected together with  $\Pi_i$ 
  in $T$,
\end{minipage}
\]
then each answer $ Q\theta$ of $T$ is correct w.r.t.\ $S$
(i.e.\ $S\models Q\theta$).
\end{theorem}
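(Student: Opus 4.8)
The plan is to reduce the statement to the ordinary correctness argument of Theorem~\ref{th:correctness}, carried out one resolution step at a time along a successful branch of $T$. Recall that an answer $Q\theta$ of $T$ is the computed answer of a successful SLD-derivation $D\colon Q=Q_0\leadsto Q_1\leadsto\cdots\leadsto Q_l=\Box$ that is a branch of $T$; here $\theta_k$ is the mgu used at step~$k$ and $Q\theta=Q\theta_1\cdots\theta_l$. Because $T$ is a csSLD-tree for $\seq\Pi$, at each step~$k$ the c-selection rule picks an atom $A_k$ of $Q_{k-1}$ and a program $\Pi_{i_k}$, and the input clause $C_k=(H_k\gets G_k)$, with body $G_k$, lies in $\Pi_{i_k}$; moreover $\theta_k=\mathit{mgu}(A_k,H_k)$, so $A_k\theta_k\cdots\theta_l=H_k\theta_k\cdots\theta_l$.

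I would prove by induction on the length $l$ of the branch --- simultaneously over all csSLD-trees for $\seq\Pi$ and all specifications satisfying the hypothesis of the theorem --- that $S\models Q\theta$, i.e.\ $S\models Q\theta\delta$ for every grounding substitution $\delta$. The base case $l=0$ is $Q=\Box$, which holds vacuously. For the inductive step, the tail $Q_1\leadsto\cdots\leadsto Q_l=\Box$ is a successful branch of the subtree $T'$ of $T$ rooted at $Q_1$. This $T'$ is again a csSLD-tree for $\seq\Pi$, via the c-selection rule $R'(Q_1,\ldots,Q')=R(Q_0,Q_1,\ldots,Q')$ obtained by prepending $Q_0$; and the hypothesis of the theorem still holds for $T'$, because any atom $A$ for which $\Pi_i$ is selected in $T'$ is also selected together with $\Pi_i$ in $T$. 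Hence by the induction hypothesis $S\models Q_1(\theta_2\cdots\theta_l)$.

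Writing $\mu=\theta_1\cdots\theta_l$, the query $Q_1$ is $Q$ with the selected occurrence $A_1$ replaced by $G_1$ and $\theta_1$ applied, so the atoms of $Q_1(\theta_2\cdots\theta_l)=\bigl(Q[A_1\!\mapsto\!G_1]\bigr)\mu$ are those of $G_1\mu$ together with all atoms of $Q\mu$ other than the occurrence $A_1\mu$. Thus $S\models Q_1(\theta_2\cdots\theta_l)$ gives, for a suitable grounding $\delta$ (one grounding both $Q\mu$ and $G_1\mu$), that $S\models G_1\mu\delta$ and that every atom of $Q\mu\delta$ other than $A_1\mu\delta$ belongs to $S$. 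It remains to place $A_1\mu\delta$ in $S$: since $A_1\mu=H_1\mu$, the ground clause $(H_1\gets G_1)\mu\delta$ is an instance of $C_1\in\Pi_{i_1}$ whose head $H_1\mu\delta=A_1\mu\delta$ is a ground instance of the atom $A_1$ selected at the root of $D$ (with $\Pi_{i_1}$ selected there), and whose body $G_1\mu\delta$ is contained in $S$; so the hypothesis of the theorem yields $A_1\mu\delta=H_1\mu\delta\in S$. Hence every atom of $Q\mu\delta=Q\theta\delta$ is in $S$, and since any grounding of $Q\theta$ extends to such a $\delta$, we conclude $S\models Q\theta$.

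The routine but delicate part, which I expect to be the main obstacle, is the substitution bookkeeping: one must check that the atom handled at a step, after composing the subsequent mgus and an extending grounding, genuinely is the head of a ground instance both of the clause used there and of the selected atom --- so that the hypothesis applies --- and that the subtree rooted at $Q_1$ inherits the csSLD structure and the hypothesis. Both are standard facts about SLD-resolution, so the proof is short once they are pinned down. An equivalent, perhaps more transparent, presentation fixes a grounding $\delta$ and builds a finite ground proof tree for $Q\theta\delta$ whose internal nodes are the instantiated selected atoms and whose clause instances are the instantiated input clauses of $D$, then inducts on its height, invoking the hypothesis at each node exactly as above; I would mention this as an alternative.
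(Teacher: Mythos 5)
Your proposal is correct and takes essentially the same route as the paper's proof: an induction along the successful branch that, at each resolution step, uses the theorem's hypothesis to transfer ``all atoms of a ground instance lie in $S$'' from the resolvent back to the parent query. The only difference is presentational --- you peel off the first step and recurse on the subtree rooted at $Q_1$, whereas the paper performs a backward induction on the position $i$ within the fixed derivation --- and your substitution bookkeeping matches the paper's construction of $Q''$ from $Q'$.
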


\begin{proof}
Assume the premises of the theorem.
We have to show that  $S\models Q\theta$.
Answer $Q\theta$ is obtained out of a successful branch of the tree.
The branch represents a derivation with queries $\SEQ Q 0 k$,
mgu's $\seq[k]\theta$, and renamed clauses  $\seq[k]C$,
where $Q_k$ is empty, $Q_0=Q$, and  $Q\theta = Q\SEQC \theta 1 k$.
By induction on $k-i$ we obtain $S\models Q_i\SEQC\theta {i+1} k$ for
$i=k, k{-}1,\ldots, 0$.
The base case is obvious.
For the inductive step, assume $S\models Q_i\SEQC\theta {i+1} k$.
Let $Q'\in ground(Q_{i-1}\SEQC\theta i k)$.
Let an atom $A$ and a program $\Pi_j$ be selected for $Q_{i-1}$.
So $C_i\in\Pi_j$.
A ground instance $Q''$ of $Q_i\SEQC\theta {i+1} k$ can be obtained out of 
$Q'$ by replacing an instance $H$ of $A$ by a sequence of atoms $\seq[m]B$
($m\geq0$), where  $(H\gets\seq[m]B)\in ground(C_i)$. 
As each atom of $Q''$ is in $S$, so is each atom of $Q'$.
\end{proof}

\begin{example}
\label{ex:pruning:correctness}
Let us continue Ex.\,\ref{ex:adam}.  The program is obviously not correct
w.r.t.\  $S$.  However, we show that for queries of the form 
$A = n o p(t,Y)$, where $t$ is ground,
 the answers of pruned trees are correct w.r.t.\ $S$.

The condition of Th.\,\ref{th:cssld.correct} obviously holds for $\Pi_1,\Pi_2$.
Consider  $\Pi_3$, it is selected together with $A = n o p(t,Y)$ only if
$t\not\in\{a d a m,eve\}$.  Then any ground instance $H$ of both $A$ and 
the head $n o p(X,2)$ of the clause from $\Pi_3$ is in $S$.
Thus the premise of Th.\,\ref{th:cssld.correct} is satisfied.

On the other hand, it is not satisfied for e.g.\ a query $A'= n o p(Y,2)$ 
(due to its instance $H = n o p(a d a m, 2) \not\in S$, $H\in ground(\Pi_3)$
and $\Pi_3$ is selected together with $A'$).
Note that
the pruned tree for such $A'$ is not correct w.r.t.\ $S$.
\end{example}

\section{Relations with declarative diagnosis}
\label{sec:dd}
We now discuss the relationship between program diagnosis, and proving
correctness and completeness of programs.
We first introduce declarative diagnosis of logic programs, 
and compare it with the proof methods of Sections
\ref{sec:correctness}, \ref{sec:completeness};
details irrelevant for this paper are omitted.
Then we show how diagnosis can be performed by attempting a correctness or
completeness proof of the program concerned.
Finally we discuss a major drawback of declarative diagnosis,
and show how it can be avoided by employing approximate specifications.

\subsection{Declarative diagnosis}
\label{sec:dd.sub}
Declarative diagnosis,
called sometimes algorithmic, or declarative, debugging,
was introduced by 
\citeN{Shapiro.book}
(see also \cite{DNM89,DBLP:conf/acsc/Naish00} and references therein).
It provides methods to locate
the reasons of incorrectness or incompleteness of programs. 
A diagnosis algorithm starts with a {\em symptom}, obtained from testing the
program; the symptom shows incorrectness or incompleteness of the program
w.r.t.\ its specification $S$. 
The symptom is
an answer $Q$ such that \mbox{$S\notmodels Q$}, or a
query $Q$ for which the computation terminates but some answers required by $S$ are not produced.
(A less general notion of incompleteness symptom is an atom $A\in S$ for
which the program finitely fails.)
The located error turns out to be a fragment of the program
 which violates our sufficient condition for
correctness or, respectively, semi-completeness.

\paragraph{Incorrectness diagnosis}

In declarative diagnosis,
the reason for incorrectness is an incorrect instance of a program clause.
An incorrect clause instance
is one which violates the sufficient condition of Th.\,\ref{th:correctness}. 
Obviously, by Th.\,\ref{th:correctness}
if the program is incorrect then such clause must exist.
  On the other hand,
  a violation of the sufficient condition does not imply that the
  program is incorrect (cf.\ Ex.\,\ref{ex:weak-spec} in the Appendix).
Incorrectness diagnosis cannot be made more precise; we cannot say which part
of the incorrect clause is wrong, as different modifications of various parts
of the incorrect clause may lead to a correct one.

Technically, an incorrectness diagnosing algorithm works by constructing a
proof tree for an atomic symptom $Q$, and then searching the tree for an incorrect
clause instance.
Notice that the actions performed by the algorithm boil down to checking 
the sufficient condition for correctness of Th.\,\ref{th:correctness},
but only for some clause instances -- those involved in producing the symptom.

\begin{example}[Incorrectness diagnosis]
\label{ex:inc-diag}%
Consider a buggy insertion sort program \cite[Sect.\,3.2.3]{Shapiro.book}:
\[
\newcommand{\dred}{\color[rgb]{.7, 0, 0}}
\mbox
{\tt\small
  \begin{tabular}{@{}l}
	isort([],[]).
\\
	isort([X|Xs],Ys) :- isort(Xs,Zs), insert(X,Zs,Ys).
\\[1ex]
       	insert(X,[],[X]).
\\
	insert(X,[Y|Ys],[Y|Zs]) :- Y > X, \mylonger{insert(X,Ys,Zs).}
\\
	insert(X,[Y|Ys],[X,Y|Ys]) :- X =< Y.
  \end{tabular}%
} %
\]
(where {\tt >}, {\tt =<} are arithmetic built-ins of Prolog \cite{Apt-Prolog}).
Procedure ${\it isort}$ 
should define the sorting relation for lists of integers.
Procedure ${\it insert}$ 
should describe inserting an element into a sorted list, producing
a sorted list as a result.  The program computes an incorrect answer 
$A = {\it isort}( [2,1,3], [2,3,1] )$.  
(Formally, $S\notmodels A$, where $S$ is a
specification known by the user; we do not formalize $S$ here.)
Thus $A$ is a symptom of incorrectness.
The diagnoser constructs a proof tree
\[
    \newcommand{\dred}{\color[rgb]{.7, 0, 0}}
\newcommand{\arca}{
                   \begin{pgfpicture}
                         \pgfpathmoveto{\pgfpointorigin}
                         \pgfpathlineto{\pgfpoint{1ex}{2ex}}
                         \pgfusepath{stroke}
                   \end{pgfpicture}
                  }
\newcommand{\arcb}{
                   \begin{pgfpicture}
                         \pgfpathmoveto{\pgfpointorigin}
                         \pgfpathlineto{\pgfpoint{-1ex}{2ex}}
                         \pgfusepath{stroke}
                   \end{pgfpicture}
                  }
    \newlength{\myextra}
    \setlength{\myextra}{1ex}
    \begin{array}[t]{@{}cc}
      \begin{array}{c}
        \dred
            {\it isort}( [2,1,3], [2,3,1] )        \\[.5ex]
         \arca  \hspace{7em}\ \arcb
      \end{array}
     \qquad\ 
    \\[\myextra]
      \begin{array}[t]{c@{}}
         {\dred
          {\it isort}([1,3],[3,1])
         }
         \\[.5ex]
         \arca  \hspace{5em} \arcb
      \end{array}
      \qquad\qquad\qquad
      \begin{array}[t]{@{}c}
        \mbox{${\it insert}(2,[3,1],[2,3,1])$}
        \\  
        \raisebox{-1ex}\ldots
      \end{array}
      \\[\myextra]
      \begin{array}[t]{c@{}}
        {\it isort}([3],[3]) \\
        \raisebox{-1ex}\ldots
      \end{array}
      \qquad
      \begin{array}[t]{c@{}}
        {\dred
          {\it insert}(1,[3],[3,1])
        }
        \\[.5ex]
          \arca  \hspace{5em} \arcb \quad
      \end{array}
    \qquad\qquad\qquad \quad  \qquad\qquad\qquad\qquad
    \\[\myextra]
    \qquad
    \qquad\qquad
      3>1
    \qquad
      {\it insert}(1,[\,],[1])
    \qquad\qquad\qquad\qquad\qquad\qquad
    \end{array}
\hspace{-3em}
\]
Starting from the root of the proof tree,
the diagnosis algorithm asks the user whether the
children of the current node are correct w.r.t.\ the user's specification.
When all the answers are YES, the
current node together with its children are an incorrect clause instance --
the error is found.  An answer NO means that a new incorrect node is found,
and it becomes the current node.

For the tree above, the questions are first asked about the two children of the
root.  The child $ A'={\it isort}([1,3],[3,1])$ is found incorrect,
formally $S\notmodels A'$.
(The question about the second child
is discussed in the next example.) 
Queries about the children of $A'$
lead to identifying ${\it insert}(1,[3],[3,1])$  as incorrect.  
As both its children are found correct, the algorithm returns 
${\it insert}(1,[3],[3,1]) \gets 3\mathop{>}1,   {\it insert}(1,[\,],[1])$
as an incorrect clause instance.
(The nodes found incorrect are marked in the diagram.)

Note that more precise locating of the error is impossible, The wrong
clause may be corrected for instance by swapping $X$ and $Y$ in $Y>X$, or
in the first and last atom of the clause.
Thus we cannot determine which fragment of the clause is erroneous.

\end{example}

\paragraph{Incompleteness diagnosis}
There exist a few versions of incompleteness diagnosis algorithms
(\cite{DBLP:journals/ngc/Naish92} provides a comparison).
Some of algorithms
 start with a symptom which is a finitely failed specified atom
$A\in S$.  
Some use, instead, a query $Q$ for which the SLD-tree is
finite, and some answers required by the specification $S$ are missing.
As the reason for incompleteness,
a not covered specified atom is found, say $p(\ldots)\in S$.
Alternatively, the result of the diagnosis is an atom $B = p(\ldots)$
which has a not covered specified instance $B\theta\in S$.
In both cases procedure $p$ is identified as erroneous.

The
existence of a not covered specified atom violates the sufficient condition for
semi-completeness of Th.\,\ref{th:completeness}.
Conversely, if the program is not complete for a query $Q$ with a finite SLD-tree
then it is not semi-complete and, by Th.\,\ref{th:completeness},
 there must exist a not covered specified atom.
 Again, violating the sufficient condition for completeness does not imply
 that the program is not complete.
Roughly speaking, the specification may be not sufficiently general,
cf.\ Ex.\,\ref{ex:weak-spec}.
Similarly to incorrectness, incompleteness diagnosis cannot be made more
precise:
A whole procedure is located,
and it cannot be determined which fragment of the
procedure is wrong, or whether a clause is missing.

We skip an example, and details of incompleteness diagnosis algorithms, see
e.g.\ \cite{Shapiro.book,DNM89,DBLP:journals/ngc/Naish92}.
We only point out a relation between completeness proving and some of such
algorithms (like that of \cite{Shapiro.book}). 
The actions performed by such algorithm boil down
to checking 
the sufficient condition for semi-completeness of Th.\,\ref{th:completeness}.
In contrast to completeness proofs, the condition is checked only for some
specified atoms, they are instances of atoms selected in the SLD-tree
producing the symptom.

\paragraph{Diagnosis by proof attempts}
In a sense, the proof methods presented in this paper supersede the
declarative diagnosis methods.
An attempt to prove a buggy program to be correct (complete) results in
violating the corresponding sufficient condition for some clause (specified
atom).
Any error located by diagnosis will also be found by a proof attempt;
moreover
no symptom is needed, and all the errors are found.
For instance, in this way
the author found an error in an early version of one of the examples 
from \cite{Drabent12.iclp}.
However the sufficient condition has to be checked for all the
(instances of the)
clauses of the program (when proving correctness), or for all specified atoms
(in the case of completeness).
In contrast, diagnosis takes into account only the clauses
(respectively, atoms) related to the computation that produced the symptom.

\subsection{A main drawback of declarative diagnosis}
\label{sec:dd-drawback}
Declarative diagnosis requires the user to know an exact specification (a
single intended model) of the program.  The program has to be (made) correct
and complete w.r.t.\ this specification.  This is a serious difficulty in
diagnosing actual programs
  \cite[Sect.\,26.8]{DNM89},
  \cite{DBLP:conf/acsc/Naish00}.
Some diagnoser queries,
 like ``is\linebreak[3] $append( [a], b, [a|b] )$ correct'',
may be confusing and hardly possible to answer,
 as the programmer often does not know some details of the
intended model,
like those related to applying $append$ to non lists.
In the author's opinion this was a main reason of the lack of
acceptance of declarative diagnosis in practice.

\pagebreak[3]
\begin{example}[Exact specification unfeasible]
\label{ex:inc-diag2}
\nopagebreak
The previous example provides a diagnoser query which is hardly possible to answer.
Determining correctness of $B ={\it insert}(2,[3,1],[2,3,1])$ is
confusing, as the user does not know how insertion into unsorted lists should
behave.
From a formal point of view, the user does not know an exact specification of 
${\it insert}$.

Actually, both cases of atom $B$ being correct or incorrect are reasonable.
They result in locating various errors,
and lead to various versions of the final, corrected program.
If the specification for ${\it insert}$ is
  \footnote{
  An (exact) specification for the remaining predicates is obvious:
\\[.5ex]
$
  \begin{array}{c}
  \left\{\,
      {\it sort}( l_1, l_2) \in\HB \ \left| \ 
      \mbox
        {\begin{tabular}{@{}l@{}}
            $l_1$ is a list of integers, \\
            $l_2$ is a sorted permutation of $l_1$ \\
          \end{tabular}%
        }
      \right.\right\}
\cup
 \left\{\,
      \begin{array}{@{}r@{}}
          {>}( i_1, i_2) \in\HB, \\
          {=}{<}( j_1, j_2) \in\HB
      \end{array}
        \ \left| \   
      \mbox
        {\begin{tabular}{@{}l@{}}
            $i_1,i_2,j_1,j_2$ are integers,\\
             $i_1>i_2$, $j_1 \leq j_2$
          \end{tabular}%
        }
      \right.\right\}.
  \end{array}
$
}   %
\[
 S^0_{\it insert} = \left\{\,
    {\it insert}( n, l_1, l_2) \in\HB\ \left| \ 
    \mbox
      {\begin{tabular}{@{}l@{}}
          $l_1, l_2$ are sorted lists of integers, \\
          ${\it elms}(l_2) = \{n\}\cup {\it elms}(l_1)$
        \end{tabular}
      }
    \right.\right\},
\]
where ${\it elms}(l)$ is the multiset of the elements of list $l$,
then $S^0_{\it insert}\notmodels B$ and 
an erroneous clause instance is eventually found within the subtree rooted in
$B$.  Correcting the program
would lead to a procedure ${\it insert}$ which is correct w.r.t.\ 
$S^0_{\it insert}$ but inefficient, as it assures that two of
its arguments are sorted lists of integers.

If the specification is 
\[
S_{\it insert} = \left\{\,
    {\it insert}( n, l_1, l_2) \in\HB \ \left| \ 
    \mbox
      {\begin{tabular}{@{}l@{}}
          if $n\in\ZZ$ and $l_1$ is a sorted lists of integers, \\
          then ${\it insert}( n, l_1, l_2)\in S^0_{\it insert}$
        \end{tabular}
      }
    \right.\right\}
\]
then $S_{\it insert}\models B$, and the diagnosis continues as in
Ex.\,\ref{ex:inc-diag}.
Each of the alternative corrections described there leads to a program
with ${\it insert}$  correct w.r.t.\ $S_{\it insert}$.
One of the programs is
the ``right'' one -- Program 3.21 of \cite{Sterling-Shapiro-short}
(with 
${\it insert}(X,[Y|Y s],[Y|Zs]) \gets \mbox{$X > Y$}, {\it insert}(X,Y s,Zs)$
as the corrected clause).

Notice that neither $S_{\it insert}$ nor $S^0_{\it insert}$ is an exact
specification of procedure ${\it isort}$ in the ``right'' program.
Providing its exact specification is not so obvious.
The  ``right'' procedure is correct w.r.t.\ $S_{\it insert}$,
and complete w.r.t.\ $S^0_{\it insert}$.
Any such procedure, combined with procedure ${\it isort}$ from
Ex.\,\ref{ex:inc-diag}, would 
implement a correct and complete predicate  ${\it isort}$.
The programmer should have the freedom of choosing (and changing) the
actual semantics of ${\it insert}$, guided by e.g.\ the efficiency of the
program. 
An exact specification would be counterproductive, at least at earlier stages
of program development.

So for this procedure one should take $S_{\it insert}$ as the specification for
correctness and $S^0_{\it insert}$ as the specification for completeness,
and apply them respectively in incorrectness and incompleteness diagnosis.
\end{example}

The discussion in the example suggests a natural way to overcome the
difficulty due to declarative diagnosis being based on a single intended
model of a program:

\begin{remark}
Declarative diagnosis should employ approximate specifications.
  In the incorrectness diagnosis the specification for correctness
should be used, and in the incompleteness diagnosis -- the specification for
completeness. 
\end{remark}

The problem with exact specifications
was first dealt with in \cite{Pereira86-short}.
In that approach a specification also describes which atoms are
{\em admissible goals}, i.e.\ are allowed to be selected in SLD-trees.
Thus specifications and diagnosis are not declarative, as they refer to the
operational semantics.
\citeN{DBLP:conf/acsc/Naish00} introduces a three valued approach, in which
a specification   classifies
  each ground atom as {\em correct},  {\em erroneous} or  {\em inadmissible}.
  For such specifications, three-valued declarative debugging algorithms
  are presented.
  From our point of view, the set of non-erroneous atoms can be understood as
  a specification for correctness, and the set of correct atoms as a
  specification for completeness.  However introducing debugging algorithms
  based on a three-valued logic seems to be an unnecessary complication.
In the approach proposed here the standard two-valued diagnosis algorithms remain unchanged.

\section{Related work}
\label{sec:related}

\paragraph{Approximate specifications}
The difficulties caused by a restriction to exact specifications
have been recognized by a few authors.
\citeN[and the references therein]{Naish.Sondergaard13.short}
 propose employing a 3-valued or 4-valued logic.
The logical value {\bf i} (inadmissible)
is given to the atoms that, in our approach, are in the difference
$S_{corr}\setminus S_{\it c o m p l}$
between the specification for correctness and that for completeness.

\citeN{Apt-Prolog} and 
\citeN{Deville} present approaches in which it is possible to deal with a suitable
subset of $\M_P$ instead of $\M_P$ itself. 
\citeN{Apt-Prolog} employs preconditions and postconditions,%
\footnote{
    Technically, pre- and postconditions are sets of atoms closed under
    substitution. 
}
and a sufficient condition that,
 for any atomic query $A$,
if $A\in pre$ then 
$A\theta\in post$ for any computed answer $A\theta$
(where $pre$ and $post$ is, respectively, a pre- and postcondition).
In this way $\M_P$ is in a sense replaced by $post$,
more precisely by the set $\M_P\cap pre$, called $\M_{(pre,post)}$.
(See also {\it\nameref{par:related:proofmethods}} below.)
Eg.\ for program APPEND the precondition may be
$pre =  \{\, app(k,l,m) \in\TB \mid
                \mbox{$l$ or $m$ is a list}
\,\}
$,
and the postcondition may describe exactly the list concatenation relation:
$post = S_{\rm APPEND}^0$, in the notation of  Ex.\,\ref{ex:spec-append}.
This gives $\M_{(pre,post)} = S_{\rm APPEND}^0$.
See \cite{DBLP:journals/tplp/DrabentM05} for further comparison.

In \cite{Deville} the notions of correctness and completeness of programs 
involve, for each predicate of the program,
the intended relation (on ground terms) and the so-called domain.
Let us represent them as Herbrand interpretations, the intended relations as 
$rel\subseteq\HB$ and the domains as $d o m\subseteq\HB$.
Then correctness (called there partial correctness) of a definite clause
program $P$ is equivalent to 
$\M_P\cap d o m\subseteq rel \cap d o m$, and completeness means inclusion in
the other direction.
For the APPEND example, $ rel = S_{\rm APPEND}^0$ and
$d o m =   \{\, app(k,l,m)\in\HB \mid
                k,l,m\mbox{ are lists}
\,\}
$.

Sometimes 
it is claimed that programs like APPEND
should be modified, so that the resulting 
program defines ``the right'' relation  \cite{Naish96,Apt93,Apt95}.
For the case of APPEND, the new program should define exactly 
the list concatenation.
This can be done by modifying its non-recursive clause
so that its variable is restricted to be bound to a list.
The clause becomes $app([\,],L,L) \gets {\it list}(L)$,
and an appropriate definition of ${\it list}$ is added.
The program is believed to be simpler to reason about, but
it is impractical due to its inefficiency.
\citeN[p.\,187]{Naish96} proposes that the meaning of the original program
should be defined as the standard meaning of the modified program.


The reader is encouraged to compare the treatment of APPEND in 
Examples \ref{ex:spec-append}, \ref{ex:append-corr}, \ref{ex:append-compl}
to that in the approaches outlined here.
To sum up, 
for reasoning about correctness and completeness of definite programs,
the approach based on approximate specifications seems preferable,
as it is simpler than the approaches outlined above, 
refers only to the the basic notions of logic programming,
and seems at least as powerful.

\enlargethispage*{2ex}
\paragraph{Proof methods} 
\label{par:related:proofmethods}
Below we review previous work on proving correctness and
completeness of logic programs.
We omit most of approaches based on operational semantics.

Our definitions of correctness and completeness (Df.\,\ref{def:corr:compl})
are basically the same as those in \cite{Sterling-Shapiro-short}.
However no particular method of reasoning on correctness and completeness is
given there, and the example informal proofs are made by analyzing possible proof trees.
The correctness proving method of \cite{Clark79} used here
(Th.\,\ref{th:correctness})  should be
well-known, but is often neglected.
For instance, an important monograph \cite{Apt-Prolog} uses a method,%
\footnote{
It is basically the method of \cite{DBLP:conf/tapsoft/BossiC89}, (stemming
from  \cite{DM88},
a paper focused on non-declarative properties of programs).
The employed notion of correctness says that, 
given a precondition $pre$ and a postcondition $post$,
if the initial atomic query $A$ is in $pre$ then
in each LD-derivation each selected atom is in $pre$ and each
corresponding computed answer is in $post$.
Note that it depends on the order of atoms in a clause.
It implies a declarative property that, in our terminology,  the program is
correct w.r.t.\  $(\HB\setminus pre)\cup(post\cap\HB)$.
}
which is more complicated and not declarative.
It proves a certain property of LD-derivations, from which the
declarative property of program correctness follows.
See \cite{DBLP:journals/tplp/DrabentM05} for comparison and argumentation that
Clark's method is sufficient,
and for its generalization to programs with negation.
Surprisingly little has been done on reasoning about completeness.
For instance, it is absent from \cite{Apt-Prolog}%
\footnote{
Instead, 
   for a program $P$ and an atomic query $A$, 
   a characterization of the set of computed instances of $A$ is studied,
   in a special case of the set being finite and the answers ground
   \cite[Sect.\,8.4]{Apt-Prolog}.
   This is based on  finding  the least Herbrand model (of $P$ 
   or of a certain subset of $ground(P)$).
}
and \cite{hogger.book}.%
\footnote{
    The notion of completeness is defined in \cite{hogger.book},
    but no sufficient condition is discussed.
}
\,
\citeN{Kowalski85shorter-book}
discusses completeness, but the example proofs concern
only correctness.  
As a sufficient condition for completeness of a program $P$ 
he suggests
$P\vdash T_S$, where $T_S$ is a specification in a form of a logical theory.
(A symmetrical criterion $T_S\vdash P$ is used for correctness, notice its
similarity to that of Th.\,\ref{th:correctness}.) 
The condition seems
impractical as it fails when $T_S$ contains auxiliary
predicates, not 
occurring in $P$. 
It also requires that all the models of $P$ (including \HB) are models of the
specification.  
But it seems that such specifications often have a
substantially restricted class of models, maybe a single Herbrand
model, cf.\ \cite{Deville}.

\citeN{Stark98-short}
presented an elegant method of reasoning about a broad class of properties of
programs with negation,
executed under LDNF-resolutions.  
A tool to verify proofs mechanically was provided.
The approach involves a rather complicated induction scheme, so 
it seems impossible to apply the method informally by programmers.

\citeN{Deville} introduces a systematic approach to constructing programs.
Correctness and completeness of a program follows from construction.
No direct sufficient criteria, applicable to arbitrary programs, are given.  
The underlying semantics of Prolog programs is that of LDNF-resolution.
Note that the last two approaches are based on operational semantics,
and depend on the order of literals in clauses.  So they are not declarative.

\citeN{Deransart.Maluszynski93} present criteria for definite program completeness, 
in a sophisticated framework of relating logic programming and
attribute grammars.
We present, as Th.\,\ref{th:completeness:recu},
 their sufficient (and complete) criterion for completeness
in a simplified setting and with a short direct proof.

The method introduced here 
(Th.\,\ref{th:completeness} with Prop.\,\ref{prop:semi-compl})
 is a simplification of that from \cite{DBLP:journals/tplp/DrabentM05}.
Due to a restriction to specifications over Herbrand domains,
we could deal here with $ground(P)$ and covered ground atoms,
instead of a theory ONLY-IF$(P)$ and its truth in an interpretation,
 as in the former work.
Another difference is employing here a new notion of semi-completeness.
Also, 
the former work uses a slightly different notion of completeness
(cf.\ a remark in Sect.\,\ref{sec:corr+compl}),
and provides a generalization for programs with negation,
with a declarative semantics of \cite{Kunen87}.

Some ideas of the current work appeared in
[\citeNP{Drabent12.iclp};\,\citeyearNP{drabent.corr.2012}]
(see also {\em\nameref{par:applications}} in Sect.\,\ref{sec:discussion}).
This includes a weaker version of Th.\,\ref{prop:cssld.complete}
on completeness of pruned SLD-trees.
The author is not aware of any other former work on proving completeness
in presence of pruning,
or on proving correctness of the answers of pruned SLD-trees
(w.r.t.\ specifications for which the program is not correct).
Part of this work was presented in a conference paper 
\cite{drabent.lopstr14}.
\enlargethispage*{1.5ex}

For the related work on declarative diagnosis see Sect.\,\ref{sec:dd}.
Dealing with the occur-check problem
(see \cite{Apt-Prolog,Deransart.Maluszynski93} and the references therein)
is outside of the scope of this paper.
\pagebreak[3]
\section{Remarks}
\label{sec:discussion}%
\nopagebreak
\paragraph{Declarativeness}
\label{par:declarative}

In this paper we are mainly interested in declarative reasoning about logic
programs.
Correctness and completeness of logic programs are declarative properties.
However non declarative methods for dealing with correctness are often suggested,
cf.\ the
\hyperref[{par:related:proofmethods}]{previous section}.
If it were necessary to reason 
non declaratively about correctness or completeness of programs
then logic programming would not deserve to be called a declarative
programming paradigm. 
This paper presents declarative sufficient conditions for correctness 
and semi-completeness
 (Th.\,\ref{th:correctness}, \ref{th:completeness}).
Some of the conditions for completeness
are also declarative
(Prop.\,\ref{prop:semi-compl}(\ref{prop:semi-compl:recu}),
                Th.\,\ref{th:completeness:recu}).

The remaining sufficient conditions for completeness of
Prop.\,\ref{prop:semi-compl} are not declarative.   
Condition (\ref{prop:semi-compl:term}) involves termination,
condition (\ref{prop:semi-compl:accept}) depends on the order of atoms in
clause bodies (and implies termination).
Notice that declarative completeness proofs either imply termination
(when the program is shown to be recurrent,
Prop.\,\ref{prop:semi-compl}(\ref{prop:semi-compl:recu})), 
or require reasoning similar to that in termination proofs
(when showing that covered atoms are recurrently covered,
Th.\,\ref{th:completeness:recu}).
Notice also that in most practical cases termination has to be established anyway.
So proving completeness in two steps -- a declarative proof of
semi-completeness and a proof of termination --
may be a reasonable compromise
between declarative and non-declarative reasoning.
Moreover, in many cases termination proofs can be obtained automatically
(see
\cite{CodishT99-termination-shorter,MesnardB05-termination-tool-shorter,%
      Schneider-KampGST09-termination-shorter,NguyenSGS11-termination-shorter}
and references therein).
Semi-completeness alone may be a useful property, as it guarantees that whenever
the computation terminates, all the answers required by the
specification have been computed.

Note that the sufficient conditions of Section \ref{sec:pruning}
for completeness of pruned trees and correctness of their answers
also are declarative, except for conditions
\ref{prop:cssld.complete.cond2a},
\ref{prop:cssld.complete.cond2c} of Th.\ref{prop:cssld.complete}
(which are similar to conditions (\ref{prop:semi-compl:term}) and
 (\ref{prop:semi-compl:accept}) of Prop.\,\ref{prop:semi-compl},
 discussed above).
Proving condition \ref{prop:cssld.complete.cond2a}, i.e.\ termination,
can be partly automated, as mentioned above.

\paragraph{Granularity of proofs}
The sufficient conditions presented in this paper impose a certain granularity 
of proofs.  Correctness proofs deal with separate program clauses. 
Proofs of semi-completeness deal with whole program procedures
(to check that an atom $p(\ldots)$ is covered one has to consider all the
clauses for $p$).  For completeness, 
a certain level mapping has to be found, or program termination
is to be considered.  In both cases the whole program has to be taken into
account.

\paragraph{Interpretations as specifications}

Specifications which are interpretations
are somewhat limited \cite{DBLP:journals/tplp/DrabentM05}.  
Some program properties cannot be expressed.
The problem also concerns e.g.\ the approaches of
\cite{Apt-Prolog}, 
\cite{hogger.book}, \cite{Kowalski85shorter-book}, \cite{Deville},
and the declarative diagnosis.
The limitation is due to considering completeness w.r.t.\ a
specification which describes a single relation for a predicate symbol.
Such specification
 cannot express that e.g.\ for a given $t$
 there exists a $u$ such that $p(t,u)$ is the program's answer.
It has to explicitly state some (one or more) particular $u$.

In our framework an approximate specification describes a set
$\{\, I \mid S_{\it c o m p l} \subseteq I \subseteq S_{corr} \,\}$
of specifications;
its elements are the least Herbrand models of programs compatible with the
specification; the set has the least and the greatest element. 
We have just shown a case where an approximate specification should describe
a set with many minimal elements.
 \ 
(A symmetric case -- related to correctness -- is e.g.\ describing that
at most one of $p(t,u)$, $p(t,s)$ is allowed to be an answer of the program.)
A possible solution may be to introduce specifications in a form 
of logical theories.
Such a theory may include axioms like $\forall t\exists u.\, p(t,u)$,
and describe a set inexpressible in our framework.

\paragraph{Applications}
\label{par:applications}
We agree with the opinion of \cite{Apt93} that  ``unless the verification
method is easy and amenable to informal use, it will be ignored''.
We want to stress the simplicity and naturalness of the sufficient conditions
for correctness (Th.\,\ref{th:correctness}) and semi-completeness 
(Th.\,\ref{th:completeness},
the condition
 is a part of each discussed sufficient condition for completeness).
Informally, the former states that the clauses of a program should produce
only correct conclusions, given correct premises.  
  The latter states that each ground atom that should be produced by $P$ 
  can be produced by a clause of $P$
  out of atoms which should be produced by $P$.
The author believes that this is a way a competent programmer reasons about 
(the declarative semantics of) a logic program. 

The practical applicability of the methods presented here 
is illustrated by a larger example in
 [\citeNP{Drabent12.iclp}; \citeyearNP{drabent.corr.2012}].
The example demonstrates
 a systematic construction of a non-trivial Prolog program
(the SAT solver of \cite{howe.king.tcs-shorter}).  
Starting from a formal specification, a definite clause logic program
is constructed hand in hand with proofs of its correctness, completeness,
and termination under any selection rule.
The final Prolog program is obtained by adding control to the logic program
(delays and pruning SLD-trees).
Adding control preserves correctness and termination.
However completeness may be
violated by pruning, and by floundering related to delays.
By Th.\,\ref{prop:cssld.complete}, 
the program with pruning remains complete.%
\footnote{
In the former paper a weaker version of Th.\,\ref{prop:cssld.complete} has
been used, and one case of pruning is discussed informally. 
A proof covering both cases of pruning is illustrated here in
Ex.\,\ref{ex:pruning2}. 
}
Reasoning about floundering is outside of the scope of the current paper;
non-floundering of this program can be confirmed by a program analysis
algorithm \cite{king.non-suspension2008}.

The example shows how much of the programming task can be
  done declaratively, 
  without considering the operational semantics; how 
   ``logic'' could be separated from ``control.''  
  It is important that all the
  considerations and decisions about the program execution and efficiency
  are independent from those related
  to the declarative semantics: to the correctness and completeness
  of the final program. 
For the role of approximate specifications in this example, see 
{\em\nameref{par:approximate:comments}} in
Sect.\,\ref{par:approximate}.

\paragraph{Future work}
A natural continuation of the work presented here is 
generalization of the proof methods for programs with negation.
There are three semantics to deal with:
3-valued completion semantics of Kunen (Prolog with additional checks for
sound negation \cite{Kunen87,Doets}),
the well-founded semantics (the same with tabulation, as in XSB
\cite{wfsem,SwiftWarren.XSB.12}),
and the answer set semantics
 \cite[and the references therein]{BrewkaET.ASP.CACM11}.
The approach for Kunen semantics could possibly
stem from that of \cite{DBLP:journals/tplp/DrabentM05}.
Also generalization to constraint logic programming and to CHR (constraint
handling rules) could be of interest.

For reasoning about pruning, it should be useful to provide methods which
directly refer to the pruning constructs (like the cut) used in a program,
instead of employing the c-selection rule.
An initial attempt was presented in \cite{drabent.lopstr14}.
An interesting issue is introducing another form of specifications,
to overcome the limitation discussed earlier in this section.

An important task is formalization and automatizing of correctness and
completeness proofs; a first step is formalization of specifications.
On the other hand, further examples of proofs for practical programs are due, 
as are experiments with teaching programmers to informally use the proof methods
in practice.
\section{Conclusions}
\label{sec:conclusions}

This paper presents proof methods for proving correctness and
completeness of definite clause programs. 
The method for correctness  \cite{Clark79} is simple and natural.
It should be well-known, but is often neglected.
Little work has been done on proving completeness, and this is the main
subject of this paper
(Sect.\,\ref{sec:completeness}).
A simplification of the approach of \cite{DBLP:journals/tplp/DrabentM05} is
presented. 
  We introduce a notion of semi-completeness, for which the corresponding
  sufficient condition deals with program procedures separately,
  while for completeness the whole program has to be taken into account.
Semi-completeness and termination imply completeness.
In practice this means that if a semi-complete program has terminated then it
has produced all the answers required by the specification.
The presented sufficient condition for semi-completeness corresponds to a
natural, intuitive way of reasoning about programs.
It is a necessary condition for program completeness
(in a sense made precise in Appendix \ref{appendix:compl}).
We propose a few sufficient conditions for program completeness;
all of them involve the above mentioned condition for semi-completeness.
Also, sufficient conditions are given for completeness being preserved
under SLD-tree pruning (Sect.\,\ref{sec:pruning}).
This is augmented by a sufficient condition for correctness of the answers of
pruned SLD-trees
(applicable in describing the effects of so called ``red cuts''
\cite{Sterling-Shapiro-short}).

Logic programming can't be considered a declarative programming paradigm
unless there exist
declarative ways of reasoning about program correctness
and completeness
(i.e.\ reasoning 
which abstracts from the operational semantics).
Regrettably,
non-declarative methods are usually suggested (cf.\ Sect.\,\ref{sec:related}).
The presented methods for proving correctness and semi-completeness are
purely declarative (Sect.\,\ref{sec:discussion}),
 however some of the sufficient conditions for
completeness are not, as they refer, perhaps indirectly, to program termination.
Such methods may nevertheless be useful in practical declarative reasoning
about programs, as usually termination has to be established anyway.

  A larger example of applying the proof methods described here is provided in
  [\citeNP{Drabent12.iclp};\,\citeyearNP{drabent.corr.2012}].
  That paper presents a construction of a non-trivial
  program hand in hand with its correctness and completeness proofs.
  The construction shows how 
   ``logic'' can be separated from ``control'';
  how the reasoning about correctness and completeness can be separated from
  that related to the operational semantics, efficiency, etc
  (cf.\ {\em\nameref{par:applications}} in Sect.\,\ref{sec:discussion}).

  We point out advantages of approximate specifications
  (Sect.\,\ref{par:approximate}).  
  They are crucial for avoiding unnecessary
  complications in constructing specifications and proofs.
  It is often cumbersome and superfluous to exactly describe the
  relations computed by a program.
  Approximate specifications are natural in the process of program development:
  when starting construction of a program, 
  the relations it should compute are often known only approximately.
  This suggests an extension of the well-known paradigm of program
  development by transformations that preserve program semantics
  (see \cite{PettorossiPS10shorter} for references);
  one should also consider 
  transformations which preserve correctness and completeness
  w.r.t.\ an approximate specification
  (cf.\ {\em\nameref{par:approximate:comments}} in
    Sect.\,\ref{par:approximate}).

  In  Sect.\,\ref{sec:dd} we compared the proof methods with declarative
  diagnosis (algorithmic debugging).  
  Similarity was demonstrated between two ways of locating errors in
  programs: 
  by means of declarative diagnosis, and by a failure of an attempt to
  construct a correctness or completeness proof.
  We have also shown how approximate specifications help to remove
  a serious drawback of declarative diagnosis.

  We argue that the proof methods presented here are simple, and
  reflect a natural way of declarative thinking about programs
  (Sect.\,\ref{sec:discussion}).
We believe that they can be actually used -- at least at an informal level -- in
  practical programming; this is supported
  by examples.

\pdfbookmark[1]{Appendix}{bookmark:appendix}
  \appendix
    \section*{APPENDIX}
  \section{Completeness of proof methods}
  \label{appendix:compl}

  An important feature of a proof method is its completeness.
  Here we show in which sense the methods of proving correctness
  (Th.\,\ref{th:correctness}), semi-completeness (Th.\,\ref{th:completeness}),
  and completeness (Th.\,\ref{th:completeness:recu}) are complete.

  Obviously, we cannot require that if a program $P$ is correct (respectively,
  complete) w.r.t.\ a specification $S$ then a proof method applied to $P$ and
  $S$ shows the correctness (completeness).  The specification may provide
  insufficient information.  

\begin{example}
[Specifications too weak (strong) for a correctness (completeness) proof\/]
\label{ex:weak-spec}%
Program 
  $P=\{\, p(X)\gets q(X).\ q(a).\,\}$ is correct w.r.t.\ 
  $S=\{\, p(a), q(a), q(b) \,\}$.  However the sufficient condition from 
  Th.\,\ref{th:correctness} is violated: clause instance $p(b)\gets q(b)$
  has its body atom in $S$ and the head not in $S$.  We need a stronger
  specification $S'=\{\, p(a), q(a) \,\}\subseteq S$, for which $P$ is correct
  and the sufficient condition holds.

In Ex.\,\ref{ex:prune},
program $\Pi_1$ is complete w.r.t.\   $S_0=\{\,q(s^j(0)) \mid j\geq0 \,\}$,
but the sufficient condition for semi-completeness of
Th.\,\ref{th:completeness} does not hold:
some atoms of $S_0$ (actually all) are not covered by $P$ w.r.t.\ $S_0$.
To prove the completeness, we need a more
general specification
  $S=S_0\cup  \{\,p(b,s^j(0)) \mid j\geq0 \,\}$.
Now each atom $S$ is covered by $\Pi_1$ w.r.t.\ $S$, and the sufficient
condition holds.
(Similar reasoning applies to the completeness of $P$ and  $\Pi_2$ 
w.r.t.\   $S_0$.)

\end{example}

Theorems  \ref{th:compl.method.corr}, \ref{th:compl.method.compl} below
are from \cite{Deransart.Maluszynski93}.

  \begin{theorem}
  [Completeness of the method for correctness]
  \label{th:compl.method.corr}
  If a program $P$ is correct w.r.t.\ a specification $S$ then there exists 
  a stronger specification $S'\subseteq S$ such that the sufficient condition
  from Th.\,\ref{th:correctness} holds for $P$ and $S'$.
  \end{theorem}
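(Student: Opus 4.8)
The natural candidate for the stronger specification is the least Herbrand model itself: take $S' = \M_P$. Since $P$ is correct w.r.t.\ $S$ we have $\M_P \subseteq S$, so $S'$ is indeed a specification with $S'\subseteq S$. It remains to check that the sufficient condition of Th.\,\ref{th:correctness} holds for $P$ and $S'$, i.e.\ that $S'\models P$, equivalently $\M_P\models P$. But this is immediate: $\M_P$ is by definition a (the least) Herbrand model of $P$, so every clause of $P$ is true in $\M_P$. Thus $S' = \M_P$ works, and the proof is essentially a one-liner once the right $S'$ is identified.

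**Key steps, in order.**
First I would set $S' := \M_P$ and observe $S'\subseteq S$ from the correctness assumption $\M_P\subseteq S$ (Df.\,\ref{def:corr:compl}). Second, I would recall that $\M_P$ is a Herbrand model of $P$ --- this is part of its definition as the least fixed point of $T_P$ (equivalently, $\M_P\supseteq T_P(\M_P)$ unfolds to: for every ground instance $H\gets\seq B$ of a clause of $P$, if $\seq B\in\M_P$ then $H\in\M_P$). Third, I would note that "$\M_P$ is a model of $P$'' is exactly the statement $S'\models P$, which is the hypothesis of Th.\,\ref{th:correctness}. Hence Th.\,\ref{th:correctness} applies to $P$ and $S'$, confirming the sufficient condition holds. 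Optionally I would remark that $S' = \M_P$ is in fact the \emph{weakest} specification for which both correctness of $P$ and the sufficient condition can simultaneously hold, paralleling the discussion in Ex.\,\ref{ex:weak-spec}.

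**Main obstacle.**
There is no real obstacle here; the only "difficulty'' is conceptual, namely recognising that the completeness-of-the-method claim is trivialised by choosing $S'$ to be the program's own least model. The statement is worth making only because it rules out the possibility that some correct program could fail to be \emph{provably} correct by Clark's method for \emph{every} stronger specification. I expect the author's proof to be of exactly this form, and the companion result for Th.\,\ref{th:completeness:recu} to require genuinely more work (constructing a level mapping from the stratification of $\M_P$ by the $T_P$-iteration ranks), so the bulk of the appendix effort will lie there rather than in this theorem.
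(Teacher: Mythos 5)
Your proposal is correct and coincides exactly with the paper's proof, which consists of the single line ``Take $S'=\M_P$'': correctness gives $\M_P\subseteq S$, and $\M_P$ being a Herbrand model of $P$ is precisely the sufficient condition $S'\models P$ of Th.\,\ref{th:correctness}. Your supporting remarks (that $\M_P$ is the weakest such specification, and that the companion result for completeness requires more work) are accurate but not part of the paper's argument.
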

  \begin{proof}
  Take $S'=\M_P$.
  \end{proof}

\begin{theorem}
[Completeness of the method for semi-completeness]%
\hspace{-1.5pt}%
If a program $P$ is semi-complete w.r.t.\ a specification $S$
then there exists a more general specification $S'\supseteq S$
such that the sufficient condition
 from Th.\,\ref{th:completeness} holds for $P$ and $S'$.
\sloppy
\end{theorem}

\begin{proof}
For $P$ and each atom $A\in S$ there exists a successful or infinite
SLD-derivation.   We consider ground instances of such derivations.
Let $D$ be a set of derivations for $ground(P)$, containing a successful or
infinite derivation for each $A\in S$.  
Let $S'$ be the set of atoms selected
in the derivations of $D$.  Each atom of $S'$ is covered by $P$.
\end{proof}

  Proving completeness of programs based on Prop.\,\ref{prop:semi-compl}
  and Th.\,\ref{th:completeness} is not complete.  It fails e.g.\ when a 
  program contains a void clause $A\gets A$.  However the method based on 
  Th.\,\ref{th:completeness:recu} is complete.

  \begin{theorem}
[Completeness of the method for completeness, Th.\,\ref{th:completeness:recu}\,]
  \label{th:compl.method.compl}
    If a program $P$ is complete w.r.t.\ a specification $S$ then 
    there exists a more general specification $S'\supseteq S$,
    and a level mapping $|\ |$, such that 
    the sufficient condition from Th.\,\ref{th:completeness:recu} holds for $P$,
    $S'$ and  $|\ |$.
  \end{theorem}
  \begin{proof}
  For each $A\in S\subseteq \M_P$ there exists a proof tree for $ground(P)$ with
  the root $A$.  Let $\cal T$ be the set of such trees and their subtrees.
  Let $S'$ be the set of their nodes.
  For a $B\in S'$,
  let $|B|$ be the minimal height of a tree from $\cal T$ with the root $B$.
  Consider such a minimal height tree.  Let $\seq B$ be the children of the
  root $B$ in the tree.  So $|B|>|B_i|$ for each $i$.
  As $\seq B\in S'$ and  $B\gets\seq B \in ground(P)$,
  atom $B$ is recurrently covered by $P$ w.r.t.\ $S'$ and $|\ |$.
  \end{proof}

  From the last theorem
   it follows that if $P$ is complete w.r.t.\ $S$ 
  then some $S'\supseteq S$ is covered by $P$.
  Thus the sufficient condition of Th.\,\ref{th:completeness}
  (applied to some $S'\supseteq S$)
  is a necessary condition for program completeness (w.r.t.\ $S$).

\section{Proofs}
\label{appendix}
This appendix contains the remaining proofs (for sections
\ref{sec:corr+compl}, \ref{sec:completeness} and \ref{sec:pruning}).
Let us 
note that the results of sections \ref{sec:corr+compl}--\ref{sec:pruning}
hold also for infinite programs, unless stated otherwise.%

\DeclareRobustCommand{\lemmaonMP}{\ref{lemma:MP}}
\subsection{The least Herbrand model and the logical consequences of a program
}
This section contains a proof of Lemma \lemmaonMP.

\medskip
{\sc Lemma \ref{lemma:MP}}. \
Let $P$ be a program, and $Q$ a query such that
\begin{enumerate}
\item 
 $Q$ contains exactly $k\geq0$ (distinct) variables, and
 the underlying language has (at least) $k$ constants not occurring in $P,Q$,
 or a non-constant function symbol not occurring in~$P,Q$.
\end{enumerate}
Then $\M_P\models Q$ iff $P\models Q$.
\medskip

\begin{proof}
The ``if'' case is obvious.
Assume $\M_P\models Q$.  
Let $\seq[k] V$ be the variables occurring in the query $Q$. 
Consider a substitution $\rho = \{ V_1/t_1,\ldots,V_k/t_k \}$ where $\seq[k]t$
are distinct ground terms whose main function symbols $\seq[k]f$ do not occur
in  $P\cup\{Q\}$.
Note that none of $\seq[k]f$ occurs in any computed answer for $P$ and $Q$.
As $M_P\models Q\rho$ and $Q\rho$ is ground, $P\models Q\rho$ by Th.\,4.30 of
\linebreak[3]\cite{Apt-Prolog}.

By completeness of SLD-resolution,
$Q\rho$ is an instance of some computed answer $Q\varphi$ for $P$ and $Q$:
$Q\rho=Q\varphi\sigma$.
So (for $i=1,\ldots,k$) $t_i=V_i\varphi\sigma$, $t_i=f_i(\ldots)$ and 
$f_i$ does not occur in $V_i\varphi$.
Thus each $V_i\varphi$ is a variable.
As $\seq[k] t$ are distinct,  $V_1\varphi,\ldots,V_k\varphi$ are distinct.
Thus $Q\varphi$ is a variant of $Q$ and, by soundness of
SLD-resolution, $P\models Q$.
\end{proof}

\subsection{Sufficient conditions for completeness}
Here we prove Th.\,\ref{th:completeness},  \ref{th:completeness:pruned},
and  Prop.\,\ref{prop:cssld.complete2}.
We begin with a slightly more general version of Th.\,\ref{prop:cssld.complete},
from which Th.\,\ref{prop:cssld.complete} follows immediately.
The only difference is condition~\ref{prop:cssld.complete.cond2a}.

\begin{lemma}
\label{lemma:cssld.complete}%
\renewcommand{\label}[1]{}%
\thcompletenesspruned
{  all the branches of $T$ are finite, or}
\end{lemma}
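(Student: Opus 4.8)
The plan is to prove, by induction on a suitable ordinal, that every atom of $S$ which is an instance of an atom selected somewhere in $T$ gives rise to an answer of the corresponding subtree. More precisely, I would first establish the following key claim: for every node $Q'$ of $T$ and every ground instance $Q'\sigma$ of $Q'$ with $S\models Q'\sigma$, the atom $Q'\sigma$ is an instance of an answer of the subtree $T_{Q'}$ rooted at $Q'$. Applying this to the root $Q$ gives the statement. The induction is on the depth of the subtree when $T$ (or all its branches) is finite; in the general case one uses a level mapping witnessing recurrence or acceptability of $P$ to do the induction on $|Q'\sigma|$ (reading $|Q'\sigma|$ as the multiset of level values of its atoms, ordered by the multiset extension of $<$ on $\NN$, which is well-founded).

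First I would set up the induction step. Take a node $Q'$ and a ground instance $Q'\sigma$ with every atom in $S$. If $Q'$ is empty, the claim is trivial; otherwise let $A$ be the atom and $\Pi_j$ the program selected by the c-selection rule at $Q'$. Compatibility of $T$ with $\S=\seq\Pi,\seq S$ forces $\Pi_j\in\{\seq\Pi\}$ (the empty-program case is excluded precisely because $A$ has an instance in $S$, so $A$ cannot be a leaf of a compatible tree; here I use condition~\ref{prop:cssld.complete.cond1}), and moreover $\Pi_j$ is suitable for $A$, i.e.\ $ground(A)\cap S\subseteq S_j$. Write $A\sigma$ for the relevant instance of $A$ inside $Q'\sigma$; since $A\sigma\in S$ and $A\sigma\in ground(A)$, we get $A\sigma\in S_j$. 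Now condition~\ref{prop:cssld.complete.cond0} says $A\sigma$ is covered by $\Pi_j$ w.r.t.\ $S$: there is a clause $C\in\Pi_j$ and a ground instance $A\sigma\gets\seq[m]B$ of $C$ with all $B_i\in S$. The standard lifting argument for SLD-resolution then yields a child $Q''$ of $Q'$ in $T$ (the children of $Q'$ are built using $\Pi_j$, and $C\in\Pi_j$) and a ground instance $Q''\sigma'$ of $Q''$ obtained from $Q'\sigma$ by replacing $A\sigma$ with $\seq[m]B$; hence every atom of $Q''\sigma'$ is in $S$. Applying the induction hypothesis to $Q''$ and $Q''\sigma'$ gives an answer of $T_{Q''}$ of which $Q''\sigma'$ is an instance; composing with the step from $Q'$ to $Q''$ produces an answer of $T_{Q'}$ of which $Q'\sigma$ is an instance, as required.

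The delicate point is justifying that the induction is well-founded and that the induction hypothesis really applies to $(Q'',Q''\sigma')$. Under hypothesis~\ref{prop:cssld.complete.cond2a} (all branches of $T$ finite, by K\H{o}nig's lemma the subtree $T_{Q'}$ has bounded depth since it is finitely branching), $Q''$ is strictly shallower than $Q'$, so ordinary induction on depth works and no level mapping is needed. Under~\ref{prop:cssld.complete.cond2b}, $P$ recurrent: since $C\in\Pi_j\subseteq P$, recurrence gives $|A\sigma|>|B_i|$ for all $i$, so the multiset of level values strictly decreases from $Q'\sigma$ to $Q''\sigma'$; induction on the multiset order applies. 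Under~\ref{prop:cssld.complete.cond2c}, $P$ acceptable w.r.t.\ some $S'$ and some level mapping and $T$ built under the Prolog (leftmost) selection rule: here $A$ is the leftmost atom of $Q'$, and one must observe that, running along a branch of $T$, the atoms to the left of the currently selected one are already ``solved'' and their instances lie in $\M_P\subseteq S'$ — this is exactly the situation in which acceptability guarantees $|A\sigma'|>|B_i|$ for the relevant ground instances, where $\sigma'$ is an instantiation compatible with the derivation; the standard argument from \cite{Apt-Prolog} that LD-derivations from such queries strictly decrease the acceptability measure carries over, and again the multiset order gives well-foundedness. The main obstacle I expect is precisely this third case: one has to be careful that the level-mapping decrease is available for the concrete ground instance $A\sigma$ appearing in $Q'\sigma$ and not merely for some other ground instance of $A$, which requires tracking how $\sigma$ relates to the substitutions along the branch; I would handle this by strengthening the claim to quantify over ground instances $Q'\sigma$ such that $\sigma$ factors through the accumulated answer substitution of the branch leading to $Q'$, so that correctness of $P$ w.r.t.\ $S'$ (part of acceptability) pins the left-context atoms into $S'$ and acceptability can be invoked verbatim.
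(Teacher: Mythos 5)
Your core inductive step is the same as the paper's: at a node $Q'$ with a ground instance $Q'\sigma$ all of whose atoms are in $S$, compatibility forces the selected $\Pi_j$ to be suitable for the selected atom $A$, so $A\sigma\in ground(A)\cap S\subseteq S_j$, coveredness of $S_j$ by $\Pi_j$ supplies a ground clause instance $A\sigma\gets B_1,\ldots,B_m$ with all $B_i\in S$, and the lifting theorem carries this down to a child of $Q'$. The difference is purely organizational: you run a well-founded induction on a measure of $(Q',Q'\sigma)$ and propagate answers of subtrees upward, whereas the paper first builds (top-down) a branch $\Delta$ of $T$ together with a parallel ground derivation $\Gamma$ for $Q\theta$ in $ground(P)$ whose queries consist of $S$-atoms, and only then argues that $\Gamma$ (hence $\Delta$) must be finite — for conditions \ref{prop:cssld.complete.cond2b} and \ref{prop:cssld.complete.cond2c} simply by citing that $ground(P)$ inherits recurrence/acceptability and that ground (L)D-derivations of such programs are finite. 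This buys the paper a cheap treatment of the acceptable case; in your version the ``standard argument carries over'' sentence for condition \ref{prop:cssld.complete.cond2c} is carrying real weight, because under acceptability the decrease $|A\sigma|>|B_i|$ is only guaranteed when $S'\models B_1,\ldots,B_{i-1}$, so the plain multiset of levels of $Q''\sigma'$ need not be smaller than that of $Q'\sigma$; you would have to reproduce the Apt--Pedreschi bounded-query machinery rather than merely invoke it.

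There is one concrete defect: your justification of case \ref{prop:cssld.complete.cond2a}. You appeal to K\H{o}nig's lemma to conclude that a subtree with all branches finite has bounded depth ``since it is finitely branching.'' But the appendix states that these results are meant to hold for infinite programs, and the whole point of this lemma's hypothesis (``all branches of $T$ are finite'', as opposed to ``$T$ is finite'' in Th.\,\ref{th:completeness:pruned}) is to cover trees that are \emph{not} finitely branching: the lemma is specialized to Prop.\,\ref{prop:compl:infi} and applied to $ground(P)$ in Ex.\,\ref{ex:prop:compl:infi}, where the tree explicitly has infinite height although every branch is finite. In that setting there is no natural-number depth to induct on. The repair is routine — a tree with no infinite branch is well-founded under the descendant relation, so you can induct on the tree order directly (equivalently, on an ordinal rank of the node) — but as written the argument for case \ref{prop:cssld.complete.cond2a} does not establish the lemma in the generality in which it is actually used. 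A small additional slip: the empty-program alternative at a nonempty node is excluded by the definition of $T$ being \emph{compatible} (as opposed to weakly compatible) with $\seq\Pi,\seq S$, not by the fact that $A$ has an instance in $S$; your reasoning there reaches the right conclusion for the wrong reason.
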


\begin{proof}
Let $Q'$ be a node of $T$ which has a
ground instance $Q'\sigma$ such that $S\models Q'\sigma$.  
Let an atom $A$ of $Q'$ and a program $\Pi_i$ be selected in $Q'$.
Now $A\sigma\in S$, $A\sigma\not\in S\setminus S_i$, thus 
$A\sigma\in S_i$, and $A\sigma$ is covered by $\Pi_i$ w.r.t.\ $S$.
Let $A\sigma\gets\seq[m]B$ be a ground instance of a clause from $\Pi_i$,
with $\seq[m]B\in S$.  Let $Q''$ be $Q'\sigma$ with $A\sigma$ replaced by 
$\seq[m]B$.  Then, by the lifting theorem  \cite[Th.\,5.37]{Doets},
$Q''$ is an instance of a child $Q^+$ of $Q'$ in $T$.
Obviously, $S\models Q''$.  

Assume the root of $T$ is $Q$.
Let $S\models Q\theta$ for a ground instance $Q\theta$ of $Q$.
By induction, 
using the previous paragraph as the inductive step,
there exists a branch $\Delta$ in $T$ which
(i)~is infinite or its last node is the empty query,
(ii) each its nonempty node has
a ground instance consisting of atoms from $S$,
(iii) the sequence of these ground instances is a
derivation $\Gamma$ for $Q\theta$ and $ground\left(P\right)$,
 and (iv) $\Delta$ is a lift \cite{Doets} of $\Gamma$.

We show that $\Delta$ is finite. 
This is immediately implied by condition 3a of the Lemma.  
Condition \ref{prop:cssld.complete.cond2b} or \ref{prop:cssld.complete.cond2c} 
implies that $ground(P)$ is recurrent or acceptable.
Hence $\Gamma$ is finite (by \cite[Corollaries 6.10, 6.25]{Apt-Prolog});
thus $\Delta$ is finite too.
By (i), $\Delta$ is a successful derivation (for $Q$ and $P$),
thus so is $\Gamma$, by (iv).
In the notation of \cite{Doets}, the respective answers of $\Gamma,\Delta$ are 
$Q\theta= res(\Gamma)$ and $res(\Delta)$
(the resultants of the derivations).
By the lifting theorem, $Q\theta$ is an instance of $res(\Delta)$,
which is an answer of $T$.
\end{proof}

Now, let us take a program $P$ and a specification $S$, and let
$n=1,\ \Pi_1=P,\linebreak[3]\ S_1=S$.
 $P$ is suitable for any atom, as $S\setminus S_1=\emptyset$.
Thus each SLD-tree $T$ for $P$ is a csSLD-tree compatible with $P,S$, and
from Lemma \ref{lemma:cssld.complete} it follows:

  \begin{proposition}
  \label{prop:compl:infi}
  \propcompletenessinfinite
  \end{proposition}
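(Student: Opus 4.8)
The plan is to obtain this proposition as the one-program special case of Lemma~\ref{lemma:cssld.complete}, exactly as announced in the paragraph immediately preceding the statement. First I would set $n=1$, $\Pi_1=P$ and $S_1=S$, so that $\S=\Pi_1,S_1$ is a split of $P$ and $S$ (trivially $P\supseteq\Pi_1$ and $S=\bigcup_{i=1}^{1}S_i$). Let $T$ be an SLD-tree for $P$ and $Q$ with no infinite branches; such a tree exists by hypothesis. At every node of an ordinary SLD-tree the children are formed using all clauses of $P=\Pi_1$, so $T$ is a csSLD-tree for the single program $\Pi_1$, under the c-selection rule that at each node selects $\Pi_1$ together with the atom chosen by the underlying selection rule.

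The main step is to verify the three hypotheses of Lemma~\ref{lemma:cssld.complete}. The first, that every atom of $S_1=S$ is covered by $\Pi_1=P$ w.r.t.\ $S$, is precisely the hypothesis of the proposition. For the compatibility hypothesis, note that $S\setminus S_1=\emptyset$, so for every atom $A$ we have $ground(A)\cap S\subseteq S_1$; hence $\Pi_1$ is suitable for every atom in the sense of Df.~\ref{def:suitable}, and since $T$ selects $\Pi_1$ at every nonempty node, $T$ is compatible with $\S$. For the third hypothesis I would use the weakest of the three options offered by Lemma~\ref{lemma:cssld.complete}, namely that all branches of $T$ are finite; this holds because $T$ has no infinite branches. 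Lemma~\ref{lemma:cssld.complete} then yields that $T$ is complete w.r.t.\ $S$.

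It remains to pass from ``$T$ is complete w.r.t.\ $S$'' to ``$P$ is complete for $Q$ w.r.t.\ $S$'' (Df.~\ref{def:corr:compl:aux}). Take a ground instance $Q\theta$ of $Q$ with $S\models Q\theta$. Completeness of $T$ says that $Q\theta$ is an instance of some answer of $T$, that is, of some computed answer $Q\varphi$ for $P$ and $Q$. By soundness of SLD-resolution $P\models Q\varphi$, hence $P\models Q\theta$, hence $Q\theta$ is an answer for $P$. As $Q\theta$ ranged over all ground instances of $Q$ satisfied by $S$, this is exactly completeness of $P$ for $Q$ w.r.t.\ $S$.

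I do not expect a genuine obstacle here: all the real work is carried by Lemma~\ref{lemma:cssld.complete}. The only points needing a little attention are (i) recognising an ordinary SLD-tree as a csSLD-tree for the trivial one-element family $\Pi_1=P$ and checking compatibility with the trivial split, and (ii) reconciling the two completeness notions — that of a (pruned) SLD-tree and that of a program for a query — which reduces to the elementary fact that instances of computed answers are answers.
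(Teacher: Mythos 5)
Your proposal is correct and follows exactly the paper's own route: the proposition is obtained as the $n=1$, $\Pi_1=P$, $S_1=S$ special case of Lemma~\ref{lemma:cssld.complete}, with suitability trivial because $S\setminus S_1=\emptyset$ and the ``all branches finite'' option discharging the third hypothesis. Your extra closing step reconciling tree-completeness with ``complete for $Q$'' is a harmless elaboration of what the paper leaves implicit.
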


Th.\,\ref{th:completeness} follows immediately from
Prop.\,\ref{prop:compl:infi}.  
Now we show an example where Th.\,\ref{th:completeness} is not applicable,
but Prop.\,\ref{prop:compl:infi} is.

\begin{example}
\label{ex:prop:compl:infi}
Program
    $P = \{\, p(s(X))\mathop{\gets} p(X).\,\ \linebreak[3] p(0).\ \,
    q(X)\mathop\gets p(s(Y)).\}
    $
is complete w.r.t.\
    $S = \{ p(s^i(0))\mid i\geq0 \} \cup \{ q(0) \}$,
but its completeness does not follow from semi-completeness.
The program loops for any instance of $Q=q(X)$ (the SLD-tree has an infinite
branch).   However all the branches of
the SLD-tree for $q(X)$ and $ground(P)$ are finite
(although the tree has infinite height).
Hence by  Prop.\,\ref{prop:compl:infi}, $ground(P)$ is complete for $Q$.
Thus so is $P$, as both programs have the same least Herbrand model.

\end{example}

\medskip

{\sc Proposition \ref{prop:cssld.complete2}}.
\propcompletenesspruned

\begin{proof}
For a ground query $Q=\seq A$, let us define
the level of $Q$  as the multiset $|Q| = bag(|A_1|, \ldots, |A_n|)$,
whenever $|A_1|, \ldots, |A_n|$ are defined.
Let $\prec_m$ be the multiset ordering \cite{Apt-Prolog}.

We follow the proof of Lemma \ref{lemma:cssld.complete}.
The first part of the proof is basically the same, with an additional
restriction made possible by
the atom $A\sigma\in S$ being recurrently covered by $\Pi_i$ w.r.t.\ $S$.
Namely
for the ground clause $A\sigma\gets\seq[m]B$ we additionally require that
$|A\sigma|, |B_1|, \ldots |B_n|$ are defined, and
and $|A\sigma|>|B_i|$ for all $i=1,\ldots,n$.
Hence $|Q''|\prec_m |Q'\sigma|$.

As in the proof of Lemma \ref{lemma:cssld.complete},
there is a branch $\Delta$ in $T$ and a derivation $\Gamma$ for $Q\theta$ and
$ground(P)$, satisfying conditions (i),\ldots,(iv).  For each query $Q''$ of
$\Gamma$, $S\models Q''$, and $|Q''|$ is defined.
As the levels of the queries in $\Gamma$ are a $\prec_m$-decreasing sequence,
and $\prec_m$ is well founded, $\Gamma$ is finite.  
Hence  $\Gamma$ and $\Delta$ are successful,
and $Q\theta$ is an instance of an answer of~$T$.
\end{proof}

\bibliographystyle{ACM-Reference-Format-Journals}
\pdfbookmark[1]{References}{bookmark:bib}
{%
 \renewcommand{\small}{}
 \bibliography{bibshorter,bibmagic,bibpearl}
}

\end{document}